\title{Minimum Height Drawings of Ordered Trees in Polynomial Time: Homotopy Height of Tree Duals}
\titlerunning{Minimum Height Drawings of Ordered Trees in Polynomial Time}
\author{Tim Ophelders}{Department of Information and Computing Science, Utrecht University, the Netherlands \and Department of Mathematics and Computer Science, TU Eindhoven, the Netherlands}{t.a.e.ophelders@uu.nl}{{https://orcid.org/0000-0002-9570-024X}}{This author was supported by the Dutch Research Council (NWO) under project no.\ VI.Veni.212.260.}
\author{Salman Parsa}{Scientific Computing and Imaging Institute, University of Utah, Salt Lake City, UT, USA}{sparsa@sci.utah.edu}{{https://orcid.org/0000-0002-8179-9322}}{This author was funded in part by the SLU Research Institute and by NSF grant CCF-1614562.}
\authorrunning{T. Ophelders and S. Parsa} 
\keywords{Graph drawing, homotopy height}
\newcommand{\R}{\mathbb{R}}					
\newcommand{\from}{\colon}
\begin{document}

\maketitle
\begin{abstract}
We consider drawings of graphs in the plane in which vertices are assigned distinct points in the plane and edges are drawn as simple curves connecting the vertices and such that the edges intersect only at their common endpoints. There is an intuitive quality measure for drawings of a graph 
that measures the height of a drawing $\phi \from G\hookrightarrow\R^2$ as follows.
For a vertical line $\ell$ in $\mathbb{R}^2$, let the height of $\ell$ be the cardinality of the set $\ell \cap \phi(G)$. The height of a drawing of $G$ is the maximum height over all vertical lines. In this paper, instead of abstract graphs, we fix a drawing and consider plane graphs. In other words, we are looking for a homeomorphism of the plane that minimizes the height of the resulting drawing.
This problem is equivalent to the homotopy height problem in the plane, and the homotopic Fr\'echet distance problem. These problems were recently shown to lie in NP, but no polynomial-time algorithm or NP-hardness proof has been found since their formulation in 2009.
We present the first polynomial-time algorithm for drawing trees with optimal height. This corresponds to a polynomial-time algorithm for the homotopy height where the triangulation has only one vertex (that is, a set of loops incident to a single vertex), so that its dual is a tree.
\end{abstract}

\section{Introduction}
A tree $T$ is called an \emph{ordered tree} if for each vertex, a fixed cyclic ordering of its incident edges is given.
Let $T$ be an ordered tree and let $f\from |T|\to \mathbb{R}^2$ be a drawing of the tree, that is, a continuous injection from the underlying topological space of the tree to the plane, in which the clockwise order of edges around each vertex is as prescribed. Any ordered tree can be recovered from any of its drawings up to degree 2 nodes. Any two drawings of the same ordered tree can be obtained from one another using an orientation-preserving homeomorphism of the plane.
We are interested in drawings that minimize the height in the following sense. Given a drawing $\phi$ and a vertical line $\ell$, the \emph{height} of the line $\ell$ is defined as $H(\ell):=|\phi(T) \cap \ell|$. That is, the number of times that the line $\ell$ intersects the drawing, where vertical segments count as infinitely many intersections. The problem of drawing a tree $T$ with optimal height asks for a drawing $\phi\from |T|\to \mathbb{R}^2$ that minimizes the maximum height over all vertical lines. We call such a drawing an \textit{optimal height drawing}. We emphasize that our drawings are not necessarily straight-line. In fact, there exist instances for which any optimal drawing requires a bend in some edge. An example is given in Figure~\ref{fig:bent}. One can check that any optimal drawing of this tree requires a bend in some edge. Although we will consider only unweighted trees, the definition of height naturally extends to edge-weighted graphs. Already in the case of weighted trees with only one vertex of degree at least three, an optimal drawing might even require an edge to form a spiral. Figure~\ref{fig:spiral} depicts an instance whose optimal drawing requires a spiral according to a computer-assisted enumeration of its drawings.
We do not know whether unweighted trees also require spiraling edges.

\begin{figure}
    \centering
    \includegraphics{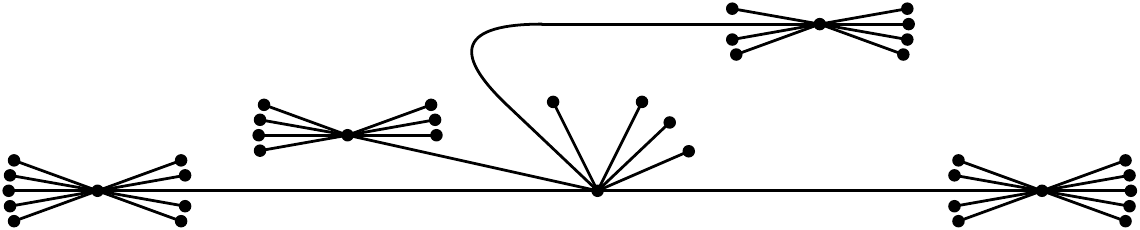}
    \caption{A bend is necessary in any drawing with height 5.}
    \label{fig:bent}
\end{figure}
\begin{figure}
    \centering
    \includegraphics{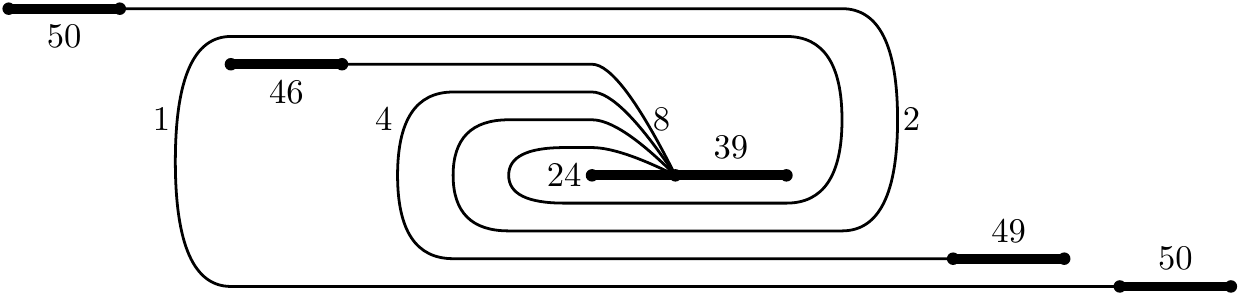}
    \caption{Spirals (e.g. the edge with weight 1)  may be necessary to draw weighted trees optimally.}
    \label{fig:spiral}
\end{figure}

The optimal height drawing of graphs is related to two significant classes of problems in computer science, and in particular, computational geometry and topology.
If, instead of ordered trees, we take (unordered) trees and allow edges to cross in the output drawing, we obtain the classical min-cut linear arrangement problem. This problem is well-studied \cite{Chu-etal85,Len82,Shi79} and Yannakakis~\cite{Yan85} presented an $O(n\log{n})$ time algorithm for drawing trees with optimal height in this sense. Of course, optimal drawings with straight-line edges always exist in this setting. On the other hand, it is known that the graph version as well as the weighted tree version~\cite{MoSu88} of the same problem is NP-hard. Since the trees corresponding to the reduction can be drawn optimally without self-intersection, it follows that optimal height drawing of unordered weighted trees is also NP-hard. All the mentioned problems lie in NP.

The optimal height graph drawing problem also shows up as a special case of an important open problem in computational geometry and topology called the homotopy height problem~\cite{Bie-etal19,Bur-etal17,Cha-etal18,ChLe09,Har-etal16}. 
In this context, a homotopy corresponds to a one-parameter family of curves $\gamma_i$ ($i\in[0,1]$) that sweeps a surface in a continuous way, where $\gamma_0$ and $\gamma_1$ are part of the input.
Roughly speaking, the homotopy height problem considers a surface homeomorphic to a sphere, disk, or annulus, endowed with a metric, and asks for a homotopy of curves that sweeps the surface in such a way that the longest curve $\gamma_i$ is as short as possible.
For a perfectly round sphere, the homotopy height is the length of its equator.
For the purpose of computation, discrete versions of the problem have been considered, where the surface is endowed with a cellular decomposition, and the lengths of curves are measured by the number of intersections with cell boundaries.
Each curve in general position with the cellular decomposition can be represented as a walk on the dual graph of the decomposition.
The vertices of the dual graph are represented geometrically as representative points of cells, and edges of the dual graph correspond to pieces of cell boundaries shared by two cells.
As a curve sweeps over the surface, it can sweep over vertices of the dual graph (resulting in a face flip), or create or remove pairs of intersections with edges of the embedded graph (resulting in a spike or unspike).
Figure~\ref{fig:dhomotopy} illustrates a dual graph (of a cellular decomposition) with vertices $P$ and $Q$, and a homotopy through curves $\gamma_i$ connecting $P$ to~$Q$. If the cell decomposition contains exactly one vertex, then its dual is a tree and the problem of homotopy height becomes equivalent to drawing trees with optimal height (In this case, the starting and ending curves are nested circles in the unbounded face so that the curves sweep an annulus).

\begin{figure}
    \centering
    \includegraphics{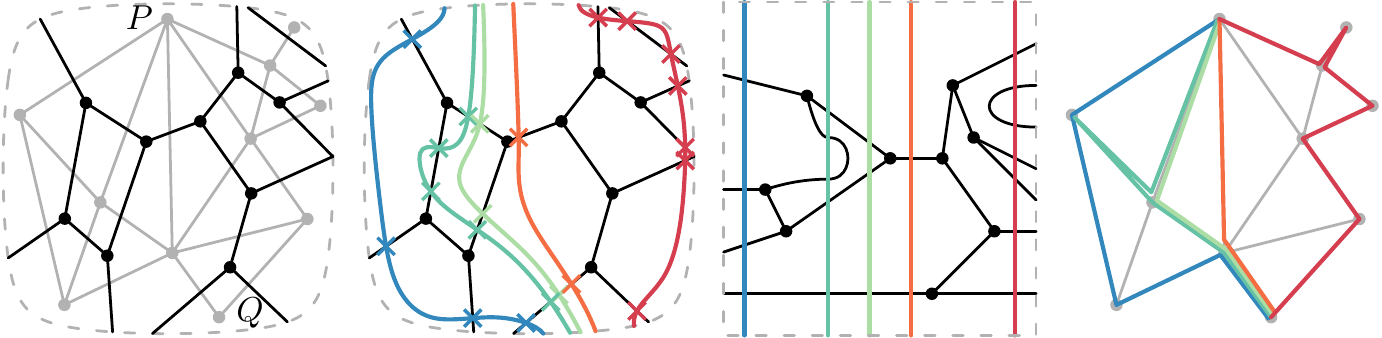}
    \caption{Left: a cellular decomposition of a disk (black) and its dual. Middle: some curves of a homotopy whose curves start at $P$ and end at $Q$, and a homeomorphism of the disk that sends the curves to vertical lines. Right: the corresponding walks in the dual graph.}
    \label{fig:dhomotopy}
\end{figure}

Although homotopy height admits an efficient $O(\log n)$-approximation algorithm~\cite{Har-etal16}, its exact computation appears to be very challenging.
In fact, it was only recently shown to lie in the complexity class NP~\cite{Cha-etal18} in the setting of edge-weighted graphs.
If the curves at the start and end of a homotopy are disjoint and shortest curves, it is known that there exists an optimal homotopy that sweeps the surface in a monotone fashion~\cite{CMOR17}, that is, any point of the surface is swept at most once.
Homotopy height is closely related to other important graph parameters~\cite{Bie-etal19}.

The duality relation between graph drawings and homotopy height is as follows. Consider a plane graph $G$. The homotopy height problem asks for a homotopy that starts as a closed curve around the graph and ends in a closed curve in the outer face that does not go around the graph. Such a homotopy must sweep over the entire graph $G$, and there exists an optimal homotopy that does this in a monotone way. Each curve of such a homotopy starts and ends in the outer face.
Drawing the intersection pattern of $\gamma_i$ with $G$ on the vertical line at $x=i$, results in a drawing of $G$ whose height is the homotopy height.
Conversely, any optimal drawing of $G$ can be turned into an optimal homotopy that is monotone. See Figure~\ref{fig:dhomotopy}.

In this paper instead of graphs we consider plane trees or ordered trees. We present the first polynomial-time algorithm for the optimal height drawing of unit weight plane trees. 
Our results give a polynomial algorithm for the homotopy height of unit-weight one-vertex (multi-)graphs. This might point to the possibility that the problem for the general graphs is also polynomial. However, already in our restricted setting, the algorithm is quite involved and does not have a clear extension to general graphs.

Although the term height has recently been used for the problem we consider, there exist related but different parameters of graph drawings that also quantify some notion of height~\cite{ALP2018, Bie-etal19, MAR2011}. 

\section{Background and terminology}

\subsection{Drawings and local disks}
    \subparagraph{Drawings.}
    Formally we work with \emph{plane trees} instead of ordered trees. This is just some reasonable, e.g. piecewise-linear, drawing $g\from T \to \mathbb{R}^2$ of a finite tree $T$ in the Euclidean plane. This plane drawing is fixed once and for all for any ordered tree $T$ and respects the given ordering around each vertex. In order to distinguish the Euclidean plane containing the drawing $g$ we use the symbol $\Pi$ for this plane, so that $g(T) \subset \Pi$.
    
    \subparagraph{Convention.} With a slight abuse of notation, we will not distinguish between $T$ and its embedding $g(T) \subset \Pi$ in the plane. We use the words edge and path for edges and simple curves on $T$ exclusively, and reserve the word curve for curves in the drawing plane (the plane to which $\Pi$ is mapped).
    
    \subparagraph*{}
    A \emph{drawing} $\phi$ of a tree $T$, is a continuous injective function mapping $\Pi$ into $\mathbb{R}^2$. We consider only drawings $\phi$ in which the image of every edge $e$ is piecewise-linear, and such that every vertical line intersects the drawing in a finite number of points. It is not difficult to see that this restriction does not affect the optimal height of the drawing.
    In our figures, for aesthetic purposes, we often draw edges as smooth curves.
    
    Let $E=E(T)$, $V=V(T)$ denote the set of edges and vertices of $T$. We always denote the number of vertices by $n$. By $H(\phi)$ we denote the \textit{height} of the drawing $\phi$. That is, the maximum number of points of the drawing on a vertical line.

    \subparagraph{Local disks.}
    Let $D\subset \Pi(T)$ be a topological disk in the plane in which $T$ is drawn. We denote the boundary of $D$ by $\partial D$. Let $T_D = T\cap D$ and assume $T_D$ is connected.
    We say that an edge $e \in E$ is a \emph{boundary edge} of $D$ if $e \cap \partial D \neq \emptyset$. We call an edge \emph{internal} if it lies in the interior of $D$. We denote by $B(D)$ the set of boundary edges of $D$. 
    Let $\partial D = C_l \cup C_r$ where $C_l \cap C_r = \{p_N,p_S\}$ is a set of two points, where none is in $T$. Intuitively, we think of $C_l$ and $C_r$ as the left and right boundary of $D$.
    This ``partition'' of $\partial D$ divides the set of boundary edges $B(D)$ into \emph{left} and \emph{right boundary edges} $B(D) = B_L(D) \sqcup B_R(D)$. We call $(D, B_L(D), B_R(D))$ a \emph{local disk}.
     
     
     A \emph{drawing of a local disk} $(D, B_L, B_R)$ is a homeomorphism $\phi\from D \to Q$ onto a rectangle $Q$ with edges $(\beta_l,\beta_t, \beta_r, \beta_d)$, such that under $\phi$, the boundary edges in $B_L$ intersect $\beta_l$, and those in $B_R$ intersect $\beta_r$ and $\phi(T_D) \cap (\beta_t \cup \beta_r)=\emptyset$. See Figure~\ref{fig:localdisk}. Note that we can select a local disk whose interior contains the whole tree, such that $T_D=T$ and there are no boundary edges. The height of the left (right) boundary in any drawing is the number of left (right) boundary edges of the local disk, and we call this number as the \textit{left (right) boundary height}. When the two boundary heights are equal we simply say boundary height.
     
    \begin{figure}
        \centering
        \includegraphics{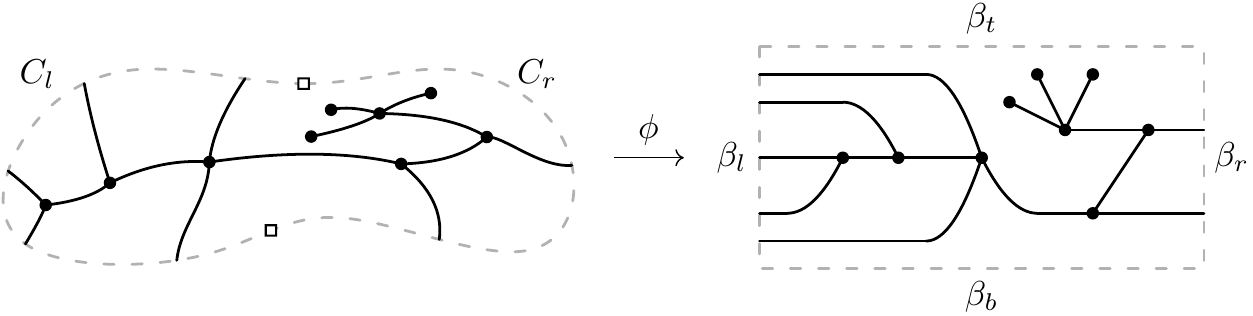}
        \caption{A local disk and a drawing.}
        \label{fig:localdisk}
    \end{figure}

    \subparagraph{The move sequence of a drawing.}
    Consider sweeping a vertical line over a drawing of $T$ (or the interior of a local disk).
    The sweep line encounters three types of events: left bends (points interior to edges of $T$ whose $x$-coordinate in the drawing is locally minimal), right bends (symmetric to left bends), and vertices.
    We will refer to these events as \textit{moves}, and the corresponding point of $T$ as its \textit{location} (i.e. the vertex corresponding to a vertex move, or the point interior to the edge corresponding to the bend move).
    We assume that all bends and vertex moves occur at distinct $x$-coordinates, and refer to the left-to-right sequence of moves of a drawing as its \emph{move sequence}.
     
\subsection{Cuts and shortcuts}
Let $D$ be a local disk. By a \emph{cut} in the local disk $(D,B_L,B_R)$ we mean the sequence of edges crossed by a curve that connects $p_N$ to $p_S$, where $p_N$ and $p_S$ are some two points giving rise to the local disk $(D,B_L,B_R)$ (an edge might repeat consecutively in the sequence). Some times we refer to the curve itself as a cut. Note that the same local disk can be defined with many such pair of points but this choice is not important. The \emph{length} (or \emph{height}) of a cut is the number of edges in it (counted with repetition), or the number of intersections of the curve with the tree $T_D$. A cut $C$ is a \emph{shortcut} if its length is smallest over all cuts of $D$. For the proof of the following lemma we refer to \cite[Lemma 4.2]{Cha-etal18}.

\begin{lemma}[Pausing at a shortcut]\label{l:pausing}
Let $D$ be a local disk, $\phi\from D \to Q$ a drawing and $C$ a shortcut in~$D$. There is a drawing $\phi'$ of height less than or equal to the height of $\phi$ in which there is a vertical line defining the cut $C$. Moreover, vertical lines of $\phi$ that are disjoint from $C$ are unaffected and appear in $\phi'$.
\end{lemma}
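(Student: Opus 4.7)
I would build $\phi'$ from $\phi$ by a surgery along $C$. Every vertical line of $\phi$ is a cut of $D$, so the shortcut property gives $k := |C| \le H(\phi)$. Write $\tilde C := \phi(C) \subset Q$; after a preliminary height-preserving isotopy of $Q$, one may assume $\tilde C$ is monotone in $y$, so that $\tilde C$ partitions $Q$ into a left region $Q_L$ and a right region $Q_R$ whose preimages under $\phi$ are the two subdisks $D_L, D_R$ of $D$ obtained by cutting along $C$. Each of $D_L, D_R$ is itself a local disk whose boundary includes $C$.

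The plan is to produce homeomorphisms $h_L : Q_L \to R_L$ and $h_R : Q_R \to R_R$ onto two rectangles that share a common vertical edge $E_0$ at some $x$-coordinate $x_0$, such that $h_L$ and $h_R$ map $\tilde C$ onto $E_0$ and send vertical lines of $\phi$ disjoint from $\tilde C$ to vertical lines at the same $x$-coordinate. Gluing the two rectangles along $E_0$ and setting $\phi' := (h_L \cup h_R) \circ \phi$ produces a drawing of $D$ in which $C$ is realized as the vertical line $E_0$ and vertical lines of $\phi$ disjoint from $C$ appear unchanged, establishing the first two conclusions.

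The main obstacle is the height bound $H(\phi') \le H(\phi)$. Any vertical line $\ell'$ of $\phi'$ inside $R_L$ pulls back under $\phi'$ to a cut of the local disk $D_L$, namely a curve in $\overline{D_L}$ from $p_N$ to $p_S$; analogously for $R_R$. Thus the question reduces to whether $D_L$ and $D_R$ admit drawings of height at most $H(\phi)$ that moreover preserve the vertical lines of $\phi$ lying entirely in $Q_L$, respectively $Q_R$. The shortcut inequality enters as follows: for any vertical line $\gamma_t$ of $\phi$ crossing $\tilde C$ at a point $q_t$, splitting $C = C[p_N,q_t] \cup C[q_t,p_S]$, the two completions $\gamma_t^L \cup C[q_t,p_S]$ (a cut of $D_L$) and $C[p_N,q_t] \cup \gamma_t^R$ (a cut of $D_R$) have total length $|\gamma_t| + |C| \le 2 H(\phi)$, so the shorter of the two is at most $H(\phi)$.

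I expect the hardest step to be designing $h_L, h_R$ so that every cut of $D_L$ (respectively $D_R$) realized as a vertical line of $\phi'$ has length at most $H(\phi)$. One approach is to pair vertical lines of $\phi'$ across $E_0$ — each $\gamma_t$ crossing $\tilde C$ contributes one vertical line on the left of $E_0$ (realizing its left completion) and one on the right (its right completion) — and to use the inequality above together with an incremental construction that processes the $\gamma_t$'s in sweep order, so that all the new cuts fit simultaneously. Verifying the continuity of the resulting surgery and handling vertical lines that cross $\tilde C$ more than once (which can be ruled out after the monotonization step but must be argued cleanly) are the further technical points I would need to manage.
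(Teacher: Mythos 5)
The paper does not spell out a proof of this lemma; it cites \cite[Lemma~4.2]{Cha-etal18}. Your overall plan --- isotope the drawing so that $\phi(C)$ becomes a vertical segment, and argue that this can be done without increasing the height of any vertical line --- is the right one. But the inequality you extract from the shortcut property is too weak, and this leaves a genuine gap. You observe only that the two completions $\gamma_t^L \cup C_2$ and $C_1 \cup \gamma_t^R$ have \emph{total} length $|\gamma_t| + |C| \le 2H(\phi)$, from which you can conclude only that the \emph{shorter} of the two is at most $H(\phi)$. That is not enough: once $C$ is a vertical line $E_0$ in $\phi'$, \emph{both} completions must appear as vertical lines of $\phi'$ (one just to the left of $E_0$, one just to the right), so both must individually have length at most $H(\phi)$. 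There is no freedom to ``use only the shorter one,'' and the incremental sweep-ordering device you propose in the last paragraph does not create any such freedom. As stated, your argument would fail whenever one completion is longer than $H(\phi)$, and you have given no reason why this cannot happen.

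In fact both completions are bounded by $|\gamma_t|$, but establishing this requires invoking minimality of $C$ \emph{twice}, not once. Writing $C = C_1 \cup C_2$ with $C_1 = C[p_N,q_t]$ and $C_2 = C[q_t,p_S]$: the curve $\gamma_t^L \cup C_2$ is itself a cut of $D$ from $p_N$ to $p_S$, so the shortcut property gives $|\gamma_t^L| + |C_2| \ge |C| = |C_1| + |C_2|$, hence $|\gamma_t^L| \ge |C_1|$. Symmetrically, $C_1 \cup \gamma_t^R$ is a cut of $D$, so $|\gamma_t^R| \ge |C_2|$. Substituting, $|\gamma_t^L| + |C_2| \le |\gamma_t^L| + |\gamma_t^R| = |\gamma_t| \le H(\phi)$ and likewise $|C_1| + |\gamma_t^R| \le |\gamma_t| \le H(\phi)$. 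With this stronger bound, the surgery becomes local and essentially automatic: each vertical line $\gamma_t$ of $\phi$ that crosses $C$ is simply replaced by its two completions, both of length at most $H(\phi)$, and vertical lines disjoint from $C$ are left untouched --- which also gives the final clause of the lemma directly. Multiple crossings can be resolved by applying the same exchange to the outermost crossing and recursing, so the pairing-and-sweep machinery you were anticipating is not needed.
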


We say that the drawing $\phi$ can \emph{pause} at the shortcut $C$, resulting in the drawing $\phi'$. When a cut~$C$ is vertical in an optimal drawing and each sub-disk cut by~$C$ contains a connected part of $T_D$, then $C$ subdivides the problem into two sub-problems whose optimal drawings can easily be merged to form an optimal drawing of the original disk.

\section{Overview of the algorithm}

Our main result is an algorithm for computing optimal drawings of plane trees.
This algorithm is a dynamic program which in a high level works as follows.
Each cell of the dynamic programming table represents a local disk and stores the optimal height of that disk (or an optimal drawing, if an optimal drawing is to be computed). The local disks represented by the cells are of two special types: spine disks and skew spine disks (defined in Section~\ref{s:spinedisks}). These disks essentially are local disks that cannot be cut by shortcuts. Row $m$ of the table stores all spine or skew spine disks with exactly $m$ interior vertices. For $m>1$, row $m$ of the table is built using the information in lower rows in two phases. The first phase constructs all possible $m$-vertex spine and skew spine disks. The second phase computes the height of an optimal height drawing for each of the computed disks of row $m$ (or computes a drawing, if the the drawing is needed). The computed optimal height (or optimal drawing) will be stored again in the table. The base of the table consists of spine or skew spine disks with a single interior vertex. The possibilities for the decomposition of a single spine disk or skew spine disk into such disks with smaller number of vertices is shown in Figures~\ref{fig:spinedecom1} and~\ref{fig:spinedecom2}. With this description, a final optimal drawing consists of drawings in Figures~\ref{fig:spinedecom1} and~\ref{fig:spinedecom2} nested inside each other, and where the deepest level is a single vertex disk. A trapezoid (representing a skew spine disks) will fit into a trapezoid and a rectangle into a rectangle (spine disk).

There are two ingredients in the proof of correctness. First, we show in Proposition~\ref{l:bpdrawing} that any (sufficiently general) drawing can be turned, without increasing the height, into a drawing that has a hierarchical structure. The root of this structure tree is a spine disk containing the whole tree (with zero boundary edges). The nodes of the structure tree are (skew) spine disks. Each node is cut essentially into a collection of sub-disks, using shortcuts that are made vertical via pausing. These sub-disks are (skew) spine disks that form the children of the node. Each node, has one of polynomially many possibilities for the decomposition, depicted in Figures~\ref{fig:spinedecom1}~and~\ref{fig:spinedecom2}. The spine disks corresponding to leaves of any structure tree are single-vertex local disks and thus trivial to draw optimally. In brief, any drawing can be turned into one which has a tree structure of spine and skew spine disks without increasing the height. 



The second ingredient of the proof of correctness is a proof that there exists some optimal drawing such that a super-set of all the (skew) spine disks in its tree structure can be enumerated in polynomial time. For this purpose, we define a quality measure for a drawing. To rule out pathological drawings and simplify our arguments, we need to consider simplified drawings which are the result of applying simplification moves of Figure~\ref{fig:simplification}. We also consider balanced drawings, which are ones where the height of the lines on both sides of (and very close to) any vertex differ by at most one. Among all the optimal drawings, we take the drawing which is simplified, balanced, and maximizes our quality measure. Lemma~\ref{l:comdecom} asserts that such an optimized drawing has itself a tree structure of (skew) spine disks. The tree structure of a drawing which optimizes a slightly stronger measure, namely the secondary quality, is called a fat structure. Proposition~\ref{p:spinechar} characterizes the spine disks that can appear in a fat structure. This description allows us to easily enumerate all possible spine disks that can appear in a fat structure in polynomial time and only store the (skew) spine disks in our table that conform to this characterization. This will result in a polynomial-sized table and hence a polynomial algorithm. 
%
%

\section{Simplifying the drawings}
    
    
    \begin{figure}[b]
        \centering
        \includegraphics{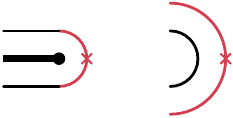}
        \caption{A right bend (marked) stuck around a vertex (left) or stuck around a bend (right). The bold line represents a bundle of arbitrarily many edges incident to the vertex.}
        \label{fig:stuck}
    \end{figure}
    Let $\phi$ be a drawing of a local disk $D$ and $T=T_D$.
    We label the left (resp. right) bends of $\phi$ as either \textit{stuck} or not, depending on whether the bend encloses the next (resp. previous) move.
    Figure~\ref{fig:stuck} illustrates the two possible reasons for a right bend to be stuck.
    Two consecutive moves of a drawing may admit a simplification (of the drawing) that replaces the two sequence by a simpler sequence of moves without increasing the height of the drawing.
    For each of these simplifications, either the first move is a non-stuck left bend, or the second move is a non-stuck right bend.
    We explain the types of simplifications involving a non-stuck right bend (see Figure~\ref{fig:simplification}), the types involving a non-stuck left bend are symmetric.
    As mentioned, the second move is a non-stuck right bend, so we distinguish cases based on the first move of the pair.
    \begin{figure}
        \centering
        \includegraphics{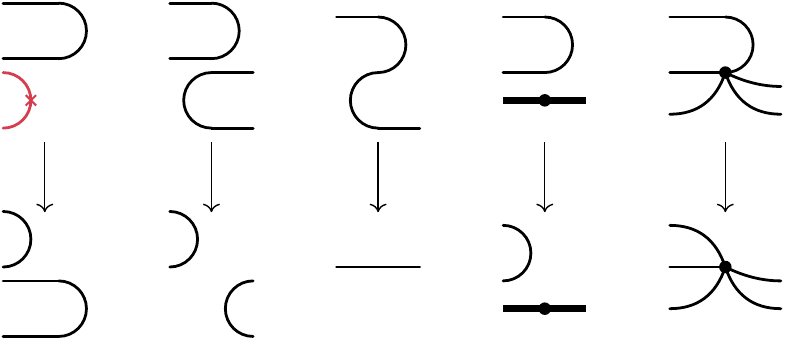}
        \caption{Left to right: stuck slide, bend-bend separation, bend-bend cancellation, vertex-bend separation, (strong) vertex-bend cancellation.
        }
        \label{fig:simplification}
    \end{figure}
    \begin{description}
        \item[Stuck slide.]
            In this case, the first move is a stuck right bend.
            The non-stuck right bend does not enclose the stuck right bend (otherwise it would also be stuck).
            Exchanging the order of the two bends ensures that neither of the resulting bends are stuck. 
        \item[Bend-bend (resp. vertex-bend) separation.]
            The first move is a left bend (resp. vertex) that is not connected to the right bend. We exchange the moves, reducing the height of the line in between.
        \item[Bend-bend cancellation.]
            The first move is a left bend that is connected to the right bend. We replace the bends by an $x$-monotone curve, reducing the number of bends.
        \item[Vertex-bend cancellation.]
            The first move is a vertex that is connected to the right bend. We replace the bend by an $x$-monotone curve, reducing the number of bends.
            We call a vertex-bend cancellation \emph{strong} if the simplification does not decrease the absolute difference between the number of edges incident to the left and right of the vertex.
    \end{description}
    We say that a drawing $\phi$ is \emph{strongly simplified} if no simplification move is possible, and \emph{simplified} if only strong simplification moves are possible.
    
    We say that $\phi$ is \emph{balanced} if for any vertex~$v$, the heights of the vertical lines immediately to the left and right of $v$ are equal if the degree of $v$ is even, and differ by 1 if the degree of $v$ is odd.
    Balanced drawings will be useful for our algorithms.
    However, strong vertex-bend cancellations may make vertices less balanced.
    
    \begin{lemma}\label{l:simpli}
        If there is a drawing $\phi$ of height $H$ of a local disk $D$, then there exists a balanced simplified drawing of $D$ of height at most $H$ with a bounded number of moves.
    \end{lemma}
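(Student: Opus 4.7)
The plan is to prove the lemma in three stages: reduce, balance, and bound.

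First, I would greedily apply the non-strong simplification moves of Figure~\ref{fig:simplification} to $\phi$. Each such move is height-non-increasing by definition. Termination follows from a lexicographic potential $\Phi(\phi)=(B(\phi),\Sigma(\phi),S(\phi))$, where $B$ counts all bends, $\Sigma$ is the sum over consecutive vertical lines (between moves) of their heights, and $S$ counts stuck bends. Cancellations strictly decrease $B$; separations preserve $B$ but strictly decrease $\Sigma$ (the line between the swapped pair loses exactly one crossing); stuck slides preserve $B$ and $\Sigma$ but strictly decrease $S$. Since these are non-negative integers, the process terminates and yields a simplified drawing $\phi_1$ of height at most $H$.

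Second, I would iteratively rebalance each unbalanced vertex. Let $v$ be such a vertex, and denote by $L_v$ (resp.\ $R_v$) the number of edges incident to $v$ that leave $v$ to its left (resp.\ right) in the current drawing; without loss of generality the heavy side is the left, so $L_v \ge R_v + 2$. Because the cyclic order at $v$ is prescribed, the left edges form a contiguous arc in this order; let $e$ be its clockwise-extremal element. I would redraw $e$ in a small neighborhood of $v$ so that it leaves $v$ on the right via a stuck right bend wrapping around $v$, and then continues along its original path. Extremality of $e$ preserves the prescribed cyclic order at $v$. The net effect is to decrease $L_v$ by one and increase $R_v$ by one, which moves one unit of height from the vertical line $\ell_L$ immediately left of $v$ to the vertical line $\ell_R$ immediately right of $v$; all other vertical lines are unaffected since the re-route is confined to a narrow strip around $v$. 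As $h_L \ge h_R + 2$, the new maximum $\max(h_L-1,h_R+1)\le h_L-1 < h_L$, so the global height remains at most $H$. The inserted bend is stuck around $v$, so it can be removed only by a strong vertex-bend cancellation; thus the drawing remains simplified. Iterating this operation produces a balanced simplified drawing $\phi_2$ of height at most $H$.

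Third, I would bound the number of moves. In any simplified drawing, two consecutive bends in the move sequence must lie on distinct edges of $T$: otherwise they would be connected by a subpath of $T$ and admit a bend-bend cancellation or bend-bend separation, contradicting simplification. Combined with the facts that the sweep line has height at most $H$ everywhere and there are exactly $n$ vertex events, a standard charging argument (each bend contributes one new $x$-monotone piece to its edge, and edges contribute at most $H$ to any one vertical line) bounds the total number of bends by a polynomial in $n$ and $H$. Hence $\phi_2$ has a bounded (indeed polynomial) number of moves.

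The main obstacle will be the balancing step: I need to verify carefully that the single stuck bend inserted at $v$ actually suffices to shift exactly one unit of height from $\ell_L$ to $\ell_R$, and that the wrapping curve can be drawn inside an arbitrarily narrow vertical strip around $v$ without introducing crossings with other edges. A careful local picture in the style of Figure~\ref{fig:stuck} should justify both, but one may need to appeal to Lemma~\ref{l:pausing} to first pause at a shortcut through $v$ so that the strip around $v$ is sufficiently uncluttered for the rerouting to take place without side effects.
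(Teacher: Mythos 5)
Your three-stage plan mirrors the paper's structure, but two of the three stages have real gaps.

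\textbf{Termination potential.} Your lexicographic potential $(B,\Sigma,S)$ is almost right, but the claim that a stuck slide ``strictly decreases $S$'' (the total number of stuck bends) does not hold in general. A stuck slide moves the stuck right bend from position $i$ to $i+1$ (both resulting bends are non-stuck), but this \emph{relocates} a bend, and the enclosure relation between the bend now at $i+1$ and the move at $i+2$ can change; likewise the left bend at $i-1$, whose stuck-ness depends on the move at $i$, can change status. So the slide can create new stuck bends elsewhere, and $S$ need not decrease. The paper sidesteps exactly this issue by observing that the count of stuck right bends \emph{at positions $\leq i$} strictly decreases while the stuck left bends are unaffected, and uses the position-weighted potential $\sum_{i\in L'}i^2 + \sum_{i\in R'}i^2$. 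Your scalar count $S$ cannot play this role.

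\textbf{Balancing preserves simplification.} This is where the argument most visibly diverges from what is actually needed. You assert that the inserted bend is ``stuck around $v$'', and conclude from this that only a strong vertex-bend cancellation could remove it. Both the premise and the inference are off. Under the paper's definition (a right bend is stuck iff it \emph{encloses} the previous move), the created bend is \emph{not} stuck: the edge $e$ is incident to $v$, so $v$ sits on the boundary of the bend's jaw, not strictly inside it — contrast this with Figure~\ref{fig:stuck}, where the vertex and its edges pass through the jaw of an unrelated bend. And even granting your premise, stuck bends can participate in stuck slides (a non-strong move), so stuckness does not by itself protect the drawing from non-strong simplification. Finally, you only consider simplifications that touch the created bend; you do not rule out a simplification touching the vertex $v$ and a left bend \emph{to the left} of $v$. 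The paper has to argue separately that any such left bend must be connected to one of the $x$-monotone curves entering $v$ (else a vertex-bend separation was already available before balancing), and not enclose any of them (else it could already be cancelled), reaching a contradiction with the pre-balancing drawing being simplified. That case analysis is essential and is missing here.

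A minor note on stage three: your claim that two consecutive bends must lie on distinct edges is not what drives the bound. The relevant structural fact in a simplified drawing is that a left bend is never immediately followed by a right bend; the move sequence therefore decomposes around the vertices into monotone runs of right bends followed by left bends, and the bound $(H+1)n$ follows because no run can exceed $H/2$ without exceeding height $H$. Your hedge about pausing at a shortcut via Lemma~\ref{l:pausing} before balancing is also not needed and not well-grounded: the balancing move is a purely local redrawing in a thin strip around $v$, and the paper's argument is entirely combinatorial.
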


    \begin{proof}
    We call two cuts of $D$ equivalent if their overlay with the drawing defines the same combinatorial map.
    There are only a bounded number of equivalence classes of cuts of $D$ of length at most $H$, and each vertical line between moves of $\phi$ corresponds to one of these cuts.
    Over the course of this proof, we will modify $\phi$.
    If there exist two vertical lines that are equivalent and separated by at least one move, then we remove all the moves in between without increasing the height of $\phi$.
    After removing the moves between all such pairs of vertical lines, we are left with a drawing of $D$ of height at most $H$ and a bounded number of moves.
    
    We show that the (strong) simplification moves eventually terminate.
    No simplification introduces a new bend, and cancellation moves reduce the number of bends, so there are only a bounded number of cancellation moves.
    Let $R=\{i\mid\text{the $i$-th move is a right bend}\}$ be the set of indices of moves that are right bends, and $L=\{i\mid\text{the $(m-i)$-th move is a left bend}\}$ (where $m$ is the number of moves in $\phi$) be the set of indices of left bends (counted from the last move).
    Separation moves decrease at least one value of $L$ or $R$ while keeping other values the same, and no move increases any value of $L$ or $R$.
    Because the values in $L$ and $R$ are bounded non-negative integers, there are a bounded number of separation moves.
    Finally, we show that there are a bounded number of stuck slides.
    If there were an unbounded number of stuck slides, then because the number of other moves is bounded, there exists an unbounded sequence of stuck slides without other moves in between.
    Stuck slides do not affect the total (already bounded) number of moves $m$.
    A stuck slide involves a stuck right (resp. left) bend at position $i$ and a non-stuck right (resp. left) bend at position $i+1$ (resp. $i-1$).
    The number of stuck right (resp. left) bends at positions $\leq i$ (resp. $\geq i$) strictly decreases and the set of stuck left (resp. right) bends stays the same.
    As such, every stuck slide strictly decreases the nonnegative integer potential function $\sum_{i\in L'}i^2+\sum_{i\in R'}i^2$, where $R'=\{i\mid\text{the $(m-i)$-th move is a stuck right bend}\}$ and $L'=\{i\mid\text{the $i$-th move is a stuck left bend}\}$, so the (strong) simplification terminates.
    
    Now that $\phi$ is strongly simplified, we turn it into a simplified balanced one.
    We essentially apply strong vertex-bend cancellation in reverse for every vertex that is not balanced, see Figure~\ref{fig:balance}.
    For this, we need to be careful not to introduce simplification moves other than strong vertex-bend cancellations.
    Consider a vertex that is not balanced, and assume that it has $k$ edges incident to its left side and $\leq k-2$ edges incident to its right (the case where there are more edges on the right is symmetric).
    We perform the following balancing move.
    Pick the topmost edge incident to the left and bend it over the top of the vertex, decreasing the number of edges incident to the left by one and increasing the number of edges incident to the right by one.
    Note that a balancing move does not increase the height of the drawing, so repeatedly applying balancing moves, we obtain a balanced drawing whose height does not exceed $H$.
    \begin{figure}
        \centering
        \includegraphics{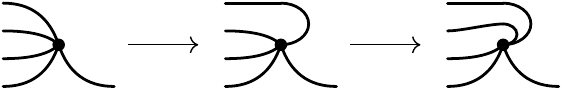}
        \caption{Balancing a vertex.}
        \label{fig:balance}
    \end{figure}
    
    It remains to show that the resulting drawing is simplified.
    For this, consider a balancing move and assume that no simplification move is possible before the balancing move, and suppose for a contradiction that a simplification move is possible after the balancing move.
    The simplification move involves either the vertex or the bend of the balancing move, but not both, because that would require a strong simplification move.
    Without loss of generality, assume that the balancing creates a right bend.
    If the simplification involves the vertex, then it involves a left bend left of the vertex.
    Before the balancing move, if this left bend was not connected to any of the $x$-monotone curves incident to the left of the vertex, then a vertex-bend separation was already possible.
    So the left bend is connected to (because our graph is a tree) exactly one of the $x$-monotone curves incident to the left of the vertex, and does not enclose any of the $x$-monotone curves (because otherwise, there would need to be an other move inside the bend), but then this bend could have been cancelled against the vertex.
    So the simplification involves the created right bend, and a right bend next to it instead.
    Therefore, it is a stuck slide, and the created bend must be stuck, but this bend supports a strong cancellation, so it is not stuck, which is a contradiction.

    \end{proof}

    \begin{lemma}\label{l:finiteness}
    Any simplified drawing of height $H$ of a local disk $D$ with $n$ vertices has at most $(H+1)n$ moves if $n>0$, and at most $H$ moves if $n=0$.
    \end{lemma}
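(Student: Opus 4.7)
My plan is to decompose the move sequence at the vertex events and bound the bend count within each resulting block using the constraints of a simplified drawing with no interior vertices.

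For the base case $n = 0$, every connected component of $T_D$ is an arc whose two endpoints lie on $\partial D$. I would first show that in a simplified drawing each arc has at most one bend. Suppose some arc carries two arc-consecutive bends $L$ (left bend) and $R$ (right bend). If $L$ and $R$ are consecutive in the $x$-order then they are connected by a single $x$-monotone piece, so a bend-bend cancellation applies and contradicts simplicity. Otherwise some other bend lies between them in $x$-order; using a sequence of stuck-slides (and their left-bend counterparts) one can shuffle those intervening bends aside without enabling any disallowed move and reduce to the previous case. With each arc contributing at most one bend, the total bend count equals $k_{LL} + k_{RR}$, the number of arcs with both endpoints on $C_l$ or both on $C_r$. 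The vertical lines just inside the left and right boundaries of $D$ realize heights at least $2 k_{LL}$ and $2 k_{RR}$ respectively, and each is bounded by $H$; summing and halving yields $k_{LL} + k_{RR} \leq H$, as required.

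For the case $n \geq 1$, I partition the $x$-axis at the $n$ vertex events into $n + 1$ open intervals, producing maximal bend-only blocks $B_0, B_1, \ldots, B_n$. Each block, restricted to its sub-disk, is a simplified drawing with no interior vertices (any applicable simplification inside a block would also apply to the original), so the $n = 0$ case bounds $|B_i|$ by the maximum height attained in $B_i$, which is at most $H$. This naive sum gives $(n+1)H + n$ moves, overshooting the desired bound $(H+1)n = nH + n$ by exactly $H$.

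To save this last $H$, I would use the finer estimate $|B_i| \leq (h_L^{(i)} + h_R^{(i)})/2 - k_{LR'}^{(i)}$ obtained during the $n = 0$ argument, where $h_L^{(i)}$ and $h_R^{(i)}$ are the heights at the left and right boundaries of block $B_i$ and $k_{LR'}^{(i)}$ counts the edges that cross $B_i$ entirely. Summing over $i$, the boundary-height terms telescope through the vertex events: apart from the terms $|B_L|$ and $|B_R|$ at the two ends, each internal height is counted twice. Using the trivial bound $h \leq H$ for $n$ of these internal heights together with the nonnegativity of the correction $\sum_i k_{LR'}^{(i)}$, together with a small bookkeeping step that pairs the two boundary terms with a single $H$, produces $\sum_i |B_i| \leq nH$, and adding the $n$ vertex moves yields the claimed $(H+1)n$.

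The main obstacle is the rigorous ``each arc has at most one bend'' claim for $n = 0$: the stuck-slide-based reduction must be justified without inadvertently producing a new forbidden configuration elsewhere in the sequence, which requires a careful case analysis of all the possible stuck/non-stuck pairings of adjacent moves. Once this is in place, the telescoping step for $n \geq 1$ amounts mostly to accounting, provided the boundary-height relations at each vertex are set up correctly.
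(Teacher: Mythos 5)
Your route is genuinely different from the paper's, but it has a real gap at the very step you flag as ``a small bookkeeping step.''

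The paper's argument is a direct run count: no left bend is immediately followed by a right bend, so the move sequence has the pattern $L^{a_0}\,V_1\,R^{b_1}L^{a_1}\,V_2\,\cdots\,V_n\,R^{b_n}$ (left bends only before the first vertex, right bends only after the last), and each maximal run of same-type bends has length at most $H/2$ because each bend in the run changes the sweep-line height by $2$. That gives $2n\cdot H/2$ bends plus $n$ vertices, i.e.\ $(H+1)n$. Your arc-and-telescope accounting is a finer, per-component version of the same counting, but your write-up falls short in two places.

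First, your ``each arc has at most one bend'' claim does not need the stuck-slide shuffling, and the shuffling as described is shaky: for a left bend $L$ that precedes a right bend $R$ in $x$-order with no vertex between them, the no-$LR$-adjacency property of a simplified drawing already forces every move between $L$ and $R$ to be a left bend, so the move just before $R$ is a left bend and an $LR$ simplification is immediately available, no rearrangement needed. Combined with the parity/ordering observation (an arc with two or more bends has its local minimum to the left of an adjacent local maximum along the arc, contradicting the global $R^aL^b$ ordering in that block), this gives the one-bend claim cleanly.

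Second, and more importantly, the telescoping does not close as stated. Summing $|B_i|\leq(h_L^{(i)}+h_R^{(i)})/2-k_{LR}^{(i)}$ and bounding the interior heights by $H$ leaves $(|B_L|+|B_R|)/2+nH-\sum_i k_{LR}^{(i)}$, and the mere nonnegativity of $\sum_i k_{LR}^{(i)}$ gives you $(H+1)n + (|B_L|+|B_R|)/2$, not $(H+1)n$. To kill the $(|B_L|+|B_R|)/2$ you must use that in $B_0$ no arc has both ends on the outer left boundary (each boundary edge meets $\beta_l$ exactly once and the piece inside $B_0$ cannot turn back without creating a forbidden $LR$ pair), so $k_{LL}^{(0)}=0$ and hence $k_{LR}^{(0)}=|B_L|$, and symmetrically $k_{LR}^{(n)}=|B_R|$. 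That gives $\sum_i k_{LR}^{(i)}\geq|B_L|+|B_R|$, which is enough. This is precisely the fact the paper encodes by saying the first block has only left bends and the last only right bends; your proposal never establishes it, and ``pairing the two boundary terms with a single $H$'' does not substitute for it.

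Your telescoping description also treats the heights at a vertex as a single quantity ``counted twice,'' but $h_R^{(j-1)}$ and $h_L^{(j)}$ are the heights just left and just right of $V_j$ and generally differ; this does not affect the bound since each is separately at most $H$, but it should be stated correctly.
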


    \begin{proof}
        If there exists a left bend immediately followed by a right bend, then those can be cancelled or separated, so any simplified drawing has the following pattern of moves.
        \begin{enumerate}
            \item Before the leftmost vertex, there are only left bends.
            \item After the rightmost vertex, there are only right bends.
            \item For any consecutive pair of vertices (i.e. there are no other vertices in between their x-coordinates) there is a (possibly empty) sequence of right bends followed by a (possibly empty) sequence of left bends.
        \end{enumerate}
        If there is a sequence of more than $H/2$ consecutive left (or right) bends, then the height exceeds~$H$, so any simplified drawing has at most $(2H/2+1)n=(H+1)n$ moves.
    \end{proof}

    \begin{observation}
    Let $\phi$ be a balanced drawing of the local disk $D$. Then applying all possible non-strong simplifying moves to $h$ keeps the drawing balanced.
    \end{observation}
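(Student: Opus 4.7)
The plan is to enumerate the non-strong simplifying moves and verify, for each type, that applying it to a balanced drawing yields a balanced drawing. Recall that the balance condition asks, at every vertex $v$, that the numbers $\ell_v$ and $r_v$ of edges incident on the left and right of $v$ satisfy $|\ell_v - r_v| \leq 1$ (with equality precisely when $\deg(v)$ is odd). The non-strong moves are exactly the five listed in Section~4 minus the strong vertex-bend cancellations, i.e., stuck slides, bend-bend separations, bend-bend cancellations, vertex-bend separations, and the non-strong vertex-bend cancellations. Since balance is a local condition at every vertex, it suffices to check that each move preserves $(\ell_v, r_v)$ at every vertex to which it applies, or else that the move is forbidden by the balance hypothesis.

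The first step I would carry out is to dispose of the four bend-centred move types. Stuck slides, bend-bend separations, and bend-bend cancellations act only on bends, which by definition live in the interiors of edges; they do not touch any vertex and therefore cannot alter the tangent direction at any vertex, so the pair $(\ell_v, r_v)$ is invariant for every $v$. The same is true of a vertex-bend separation: the right bend involved is, by hypothesis, not connected to the vertex, so it lies on a different edge, and swapping the $x$-order of the vertex and the bend does not change which edge-germs leave the vertex to the left and which to the right. Thus all four of these moves leave the entire multiset $\{(\ell_v, r_v)\}_v$ unchanged and hence preserve balance automatically.

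The final and main step is to rule out non-strong vertex-bend cancellations outright in a balanced drawing. Such a cancellation takes a vertex $v$ together with a connected bend and replaces the bent edge by an $x$-monotone one, thereby shifting one edge incident to $v$ from one side to the other; consequently $\ell_v - r_v$ changes by $\pm 2$. A case analysis on the sign of $\ell_v - r_v$ shows that $|\ell_v - r_v \pm 2| \geq |\ell_v - r_v|$ exactly when $|\ell_v - r_v| \leq 1$, so the cancellation is strong precisely in this regime and non-strong precisely when $|\ell_v - r_v| \geq 2$. Since $\phi$ is balanced, $|\ell_v - r_v| \leq 1$ holds at every vertex, and so no non-strong vertex-bend cancellation is ever applicable. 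Combining the two steps, every non-strong simplifying move that is applicable to a balanced drawing preserves $(\ell_v, r_v)$ at every $v$, so balance persists through the whole process. I expect the only real subtlety to be the bookkeeping in the last paragraph, namely confirming the sign conventions in the definition of ``strong'' align with the symmetric left/right cases; once that is in place the rest is immediate.
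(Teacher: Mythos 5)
Your proof is correct and, since the paper states this observation without a proof, it supplies exactly the argument the authors presumably had in mind: the four bend-centred non-strong moves never change which side of any vertex an edge-germ leaves from, and a non-strong vertex-bend cancellation is never available at a balanced vertex. You correctly translate the paper's height-based balance condition into the side-count condition $|\ell_v-r_v|\le 1$ (the two are equivalent because crossing the sweep line over $v$ changes the height by exactly $r_v-\ell_v$, and parity with $\ell_v+r_v=\deg v$ forces the two formulations to coincide).

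One small imprecision worth tightening: the statement that a vertex-bend cancellation ``is strong precisely in this regime and non-strong precisely when $|\ell_v-r_v|\ge 2$'' overstates the biconditional. For a fixed cancellation the change in $\ell_v-r_v$ has a definite sign (a ``vertex then right bend'' cancellation moves an edge from the right side to the left, so $\ell_v-r_v$ goes up by $2$; the symmetric case goes down by $2$), and strongness of that specific move only requires the one-sided inequality $\ell_v-r_v\ge -1$ (respectively $\le 1$). In particular, at an unbalanced vertex one direction of cancellation may still be strong. What your argument actually uses, and what is true, is the one implication: if $|\ell_v-r_v|\le 1$ then \emph{both} potential cancellations at $v$ are strong, hence no non-strong cancellation can ever fire on a balanced drawing. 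That implication is all the observation needs, so the proof goes through; I'd just rephrase that sentence so it claims only the implication that is used.
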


\section{Bubbling a sub-tree}

    Let $e$ be an edge of a tree $T$. There are two sub-trees $T_1$, $T_2$ of $T$ that result from removing $e$. For $i\in \{ 1,2 \}$, we call the rooted trees $T_i=T_i(e)$, with the root chosen to be the endpoint of $e$ in $T_i$, the \emph{rooted sub-trees anchored via the edge $e$} and the edge $e$ the \emph{anchor edge} of the rooted tree $T_i$. We call the endpoint of $e$ which is not the root of $T_i$ the \emph{anchor vertex} of $T_i$. The \emph{exposed height} of the sub-tree $T_i \cup \{e \}$, denoted $eH(T_i,e)$, is the height of the optimal height drawing of a local disk containing $T_i$ in its interior and such that the anchor edge $e$ is the single boundary edge, see Figure~\ref{fig:exposed}. We call such a drawing of $T_i$ an \emph{exposed drawing} of the sub-tree $T_i$ with respect to $e$.
    \begin{figure}
        \centering
        \includegraphics{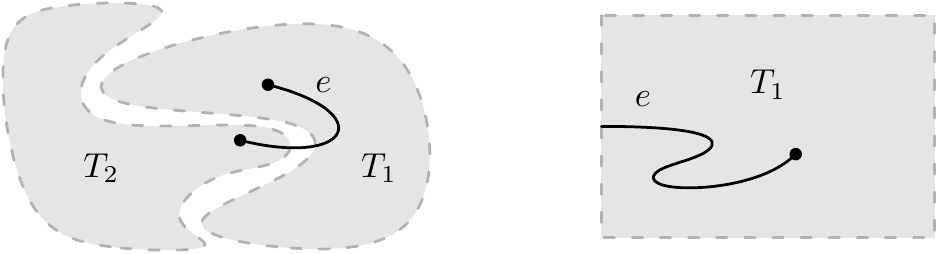}
        \caption{Two local disks (containing sub-trees $T_1$ and $T_2$) with a single boundary edge $e$ (left). An exposed drawing of the sub-tree $T_1$ with anchor edge $e$ (right).}
        \label{fig:exposed}
    \end{figure}
    
    Let $P$ be a simple path (possibly of length\footnote{The length of a path is the number of its edges.} 0) in $T$. The \emph{neighborhood} of $P$, denoted $N(P)$, is the subset of vertices of $T$ not in $P$ which are connected in $T$ to some vertex of $P$ by an edge. We say that a rooted sub-tree $T'' \subset T$ is a sub-tree \emph{anchored at $P$} if $V(T'')\subset V(T)-V(P)$ and the root of $T''$ is in $N(P)$. It  follows that, for any $P$, the edge-set of $T$ is partitioned into three sets, the edges of $P$, the edges of sub-trees anchored at $P$, and the anchor edges incident to $P$.
    
    We call an exposed drawing of a sub-tree $T'$ in a drawing of $T$ \textit{a bubble} if the strip of the plane containing this exposed drawing contains no moves of the rest of the drawing. In other words, we can compress the exposed drawing of $T'$ into a drawing inside an arbitrary small bubble, without affecting the height of the drawing of $T$.
    
    Let $\phi$ be a drawing of the tree $T$ and let $T' \subset T$ be a sub-tree anchored using an edge $e$ to a vertex $v$. We say that $\phi'$ is obtained by \emph{bubbling the sub-tree $T'$ at the point $t \in \mathbb{R}^2$} if $\phi'$ is such that i) $H(\phi')\leq H(\phi)$, ii) $T'$ is drawn in a bubble with $e$ the boundary edge and $t$ being the point of $e$ on the boundary, iii) the drawing is changed only over $T'$ and the edge $e$, and iv) the $\phi'$-image of $e$ is contained in $\phi$-image of $T' \cup e$. See Figure~\ref{fig:bubbling} for an example, where $t=t_r$. One of our main observations is that bubbling is always possible at a suitable $t$.
    \begin{figure}[b]
        \centering
        \includegraphics{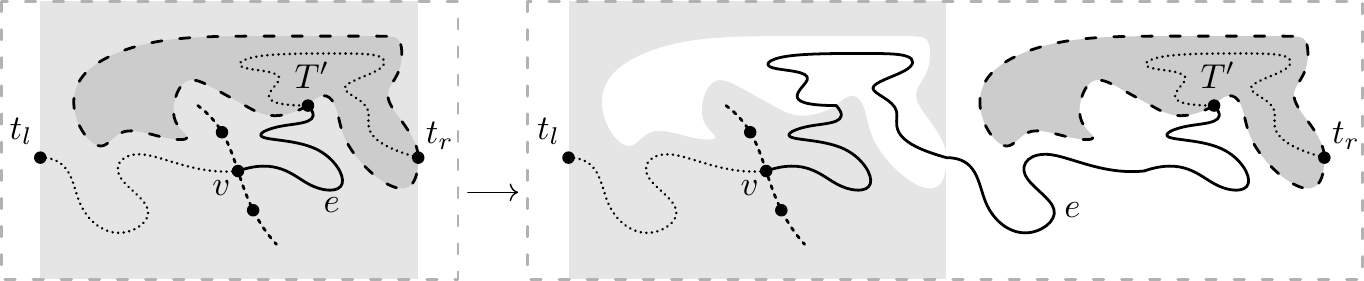}
        \caption{Bubbling the sub-tree $T'$, $t_l$ and $t_r$ are locations of extreme moves in the given drawing of the local disk}
        \label{fig:bubbling}
    \end{figure}
    
    \begin{lemma}\label{l:bubble}
    Let $T$ be a tree and $h$ be a drawing of $T$. Let $T'$ a sub-tree anchored at a vertex $v$. Let $t_l$ and $t_r$ be the points of $T$ which have the smallest and the largest $x$-coordinate, respectively. We can assume these points are unique. If $T'$ contains exactly one of $t_l$ and $t_r$, then $T'$ can be bubbled at that point. 
    \end{lemma}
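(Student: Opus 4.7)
Without loss of generality assume $T' \ni t_r$ (the case $T' \ni t_l$ is symmetric). The plan is to construct $\phi'$ explicitly by keeping $h$ on $T \setminus (T' \cup \{e\})$, rerouting the anchor edge $e$ along the simple $T$-path from $v$ toward $t_r$, and placing $T'$ inside an arbitrarily small disk $B$ attached at $t_r$ as an exposed drawing. The enabling observation is that, since $t_r$ is the unique rightmost point of $h(T)$ and lies in $h(T')$, the quantity $x_0 := \max\{x(p) : p \in h(T \setminus T')\}$ satisfies $x_0 < x(t_r)$, so a narrow vertical strip immediately to the left of $t_r$ meets only $h(T' \cup \{e\})$.

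First I would let $P$ be the simple $T$-path from $v$ to $t_r$; its image $h(P)$ is a simple curve from $h(v)$ to $t_r$ contained in $h(T' \cup \{e\})$, and condition (iv) forces any admissible $\phi'(e)$ to be a sub-curve of $h(P)$. Because $t_r$ is the \emph{unique} rightmost point of $h(T)$, we have $x_P^* := \sup\{x(p) : p \in h(P) \setminus \{t_r\}\} < x(t_r)$. I would then pick a sufficiently small disk $B$ with $t_r \in \partial B$, contained in the half-plane $\{x \geq x_P^* + \varepsilon\}$ for some small $\varepsilon > 0$, so that $B$ is disjoint from $h(T \setminus T')$ and the portion of $h(P)$ lying outside $B$ has $x$-coordinate strictly below every $x$-coordinate attained inside $B$. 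I would then choose a root image $\phi'(r) \in h(T' \cup \{e\}) \cap \mathrm{int}(B)$ close to $t_r$, set $\phi'(e)$ to be the corresponding sub-curve of $h(P)$ (exiting $B$ at $t_r$), and realize $T'$ inside $B$ as an exposed drawing with anchor $e$ of height at most $H(\phi)$. Such an exposed drawing exists because $h|_{T' \cup \{e\}}$, enclosed in a thin rectangular local disk with a short initial piece of $e$ near $h(v)$ as its single boundary edge, is itself an exposed drawing of $T'$ of height $\leq H(\phi)$, and it can be horizontally rescaled into $B$ without changing its height.

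To verify $H(\phi') \leq H(\phi)$, I would consider a vertical line $\ell$ at $x$-coordinate $c$. If $c \notin [x(v), x(t_r)]$ the drawing is locally unchanged. If $c \in [x(v), x(t_r)]$ and $c$ lies outside the $x$-range of $B$, then $\phi'(e) \cap \ell \subseteq h(P) \cap \ell \subseteq h(T' \cup \{e\}) \cap \ell$ and $h$ is unchanged on $T \setminus (T' \cup \{e\})$, so the total count is at most $|h(T) \cap \ell| \leq H(\phi)$. The main obstacle is the remaining case, where $c$ lies in the $x$-range of $B$: here $c > x_0$ forces $h(T \setminus T') \cap \ell = \emptyset$; the choice of $B$ ensures that $\phi'(e) \setminus B$ contributes $0$ at $\ell$; and the bubble contributes at most the height of its exposed drawing, which is $\leq H(\phi)$. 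Conditions (ii)--(iv) of the bubbling definition hold directly from the construction: $T'$ sits inside $B$ with $e$ crossing $\partial B$ at $t_r$, only $T' \cup \{e\}$ has been modified, and $\phi'(e) \subseteq h(P) \subseteq h(T' \cup \{e\})$.
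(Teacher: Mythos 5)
Your construction has the same overall architecture as the paper's: keep $h$ unchanged on $T \setminus (T' \cup \{e\})$, reroute $e$ along the image of the path $P$ from $v$ to $t_r$, and compress $T'$ into an arbitrarily small bubble $B$ attached at $t_r$. The verification that lines outside the $x$-range of $B$ have height at most $H(\phi)$ is essentially the same as in the paper.

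However, there is a genuine gap in the step that produces the drawing of $T'$ inside $B$. You assert that $h|_{T'\cup\{e\}}$, ``enclosed in a thin rectangular local disk with a short initial piece of $e$ as its single boundary edge, is itself an exposed drawing of $T'$ of height $\leq H(\phi)$.'' This is not justified. An exposed drawing is a homeomorphism of the local disk onto a rectangle with the anchor edge crossing the \emph{left} boundary edge $\beta_l$, i.e., entering at the leftmost $x$-coordinate of the image. But in $h$, the set $h(T'\cup\{e\})$ need not have $e$ at its leftmost $x$-coordinate: a leaf of $T'$ may be drawn far to the left of $h(v)$. The homeomorphism that rectangularizes the local disk and brings the crossing point of $e$ to $\beta_l$ therefore need not preserve the vertical-line structure and can increase the height; a horizontal rescaling, in particular, does not change the $x$-order of points and cannot achieve this. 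The obvious repair, drawing $e$ just to the left of the leftmost point of $h(T')$, potentially raises the height by one, which is not affordable. The paper circumvents this with a specific rerouting that you omit: the piece $e_2$ of $e$ that goes into the bubble is assigned the image $\phi(S)$, where $S$ is the $T$-path from $t_l$ (the globally leftmost point, which lies \emph{outside} $T'$) to the root $r$ of $T'$. Since $\phi(T') \cup \phi(S) \subset \phi(T)$ and the two sets meet only at $\phi(r)$, the resulting drawing $\phi_2$ of $T' \cup e_2$ has height at most $H(\phi)$ \emph{and} has its anchor endpoint at the leftmost $x$-coordinate of the sub-drawing, which is exactly the exposed-drawing property you need before compressing into $B$. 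Adding this rerouting is essential; without it the bubble's internal drawing is not controlled.

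A smaller imprecision: you define $x_0 := \max\{x(p) : p \in h(T\setminus T')\}$ and claim $x_0 < x(t_r)$. Since $T\setminus T'$ contains $e$ minus the root, whose image can approach $t_r$ arbitrarily closely when $t_r$ is the root of $T'$, this is at best a supremum that may equal $x(t_r)$. The quantity you actually need is $\sup\{x(p) : p \in h(T\setminus(T'\cup e))\}$, which is strictly less than $x(t_r)$ and suffices for your argument.
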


    \begin{proof}
    Without loss of generality, assume that $t_l$ is not in the sub-tree $T'$, see Figure~\ref{fig:bubbling}. Let~$e$ be the edge connecting the root of $T'$ to a vertex $v$.  Let $Q$ be the (unique) path connecting $v$ to $t_r$, and let $S$ be the path from $t_l$ to the root of $T'$. We take an arbitrary point $s \in e$ and break the tree $T$ into $T_1$ and $T_2$, where $T_2$ contains the sub-tree $T'$ and $s$ is a vertex in $T_1$ and $T_2$ and $T_1 \cap T_2 = \{s\}$. Let $e_1$ and $e_2$ be the edges in $T_1$ and $T_2$, respectively, which are pieces of $e$. Let $\phi_1$ be the drawing of $T_1$ obtained by restricting $\phi$ to $T_1 - e_1$ and setting the image of $e_1$ in $\phi_1$ to be the image of the path $Q$ under $\phi$. Observe that the height of $\phi_1$ is at most $H(\phi)$ and the image of $s$ lies at the largest $x$-coordinate in $\phi_1$. Define $\phi_2$ by restricting $\phi$ to $T'$ and define the image of $e_2$ to be the image of the path $S$. Again, the height of $\phi_2$ is at most $H$ and the image of $s$ lies at the smallest $x$-coordinate. 
    Now we can compress the drawing $\phi_2$ in the $x$-direction to make the $x$-interval of $T'$ as small as necessary. The possibility of making the bubble small implies we can make sure that there are no other moves over the $x$-coordinate of the bubble, in case we are doing this operation on part of a larger drawing. The drawing required by the lemma is then the drawing $\phi'$ which is obtained by concatenating $\phi_1$ and $\phi_2$, possibly after translating one of them in the $y$-direction.
    Clearly, $\phi'$ has height at most $H(\phi)$.
    \end{proof}

    
\section{Spine disks}\label{s:spinedisks}
Let $D$ be a local disk and $B(D)$ the set of its boundary edges. Recall that the boundary edges of a local disk are divided into left boundary edges, $B_L(D)$, and right boundary edges, $B_R(D)$. Let $e_l \in B_L(D)$ and $e_r \in B_R(D)$. We say that $e_l$ and $e_r$ are \emph{opposite} one another if they are incident to the same vertex in the interior of $D$.
    We call a local disk a \emph{spine disk} with \emph{spine} $P$, if all of the following hold:
    
    \begin{enumerate}
        \item\label{i:spine1} $P$ is a simple path in $T_D$, such that every boundary edge is incident to a vertex of $P$, and $P$ is interior-disjoint from boundary edges of $T_D$.
        
        \item There is a bijection $\alpha\from B_L(D) \to B_R(D)$ such that each $e$ is opposite $\alpha(e)$.
        
        \item If $P$ has at least two vertices, there are boundary edges incident to its extremal vertices.
        
    \end{enumerate}
    
    If there are no boundary edges, we let $P$ be an arbitrary vertex, so that $P$ is always defined. Therefore a local disk that contains all of the input tree $T$ and $P$ chosen to be any vertex is a spine disk.
    
    A \emph{skew spine disk} is a spine disk to which a new boundary edge incident to some vertex of $P$ is added. It follows that the height of one boundary line of any drawing of a skew spine disk is one more than the height of the other boundary line. We call a (skew) spine disk a \textit{vertex disk} if there is a single vertex in its interior. Figure~\ref{fig:spdef} shows an optimal drawing of a tree and a ``decomposition'' of the drawing into spine (rectangle) and skew spine disks (trapezoids). Note that a skew spine disk with a single boundary edge is a bubble. All skew spine disks in Figure~\ref{fig:spdef} are bubbles.
    
    \begin{figure}[b]
        \centering
        \includegraphics{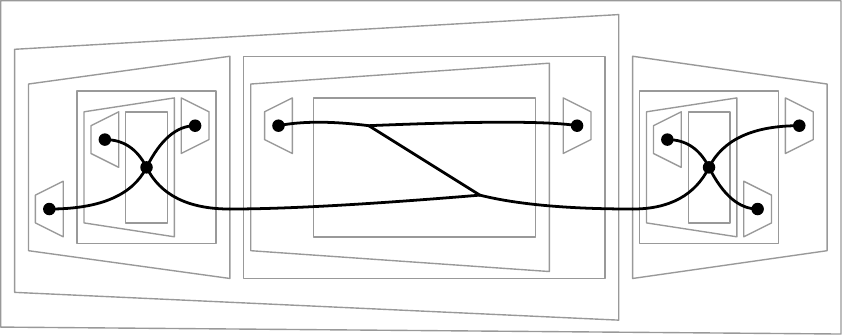}
        \caption{Spine and skew spine disks.}
        \label{fig:spdef}
    \end{figure}

\subsection{Spine decomposition}
    
    We introduce some terminology before stating one of our main propositions. Let $D$ be a local disk and let $C$ be a collection of disjoint shortcuts (combinatorially distinct from the left and the right boundary lines) in $D$, and let $C$ cut the disk $D$ into disks $D_1, \ldots, D_m$. According to Lemma~\ref{l:pausing}, an optimal drawing $\phi$ of $D$ can be obtained by gluing optimal drawings $\phi_i$ of the $D_i$, $i=1,\ldots,m$. Then we say $\phi$ is obtained by \emph{merging} the drawings $\phi_1, \ldots, \phi_m$. 
    In our schematics, we draw a rectangle for a spine disk and a trapezoid for a skew disk. The shorter side of the trapezoid has one less boundary edge than the long side. A vertex inside a rectangle or a trapezoid indicates a vertex disk. A thick black line is a collection of parallel lines whose number is indicated. A \emph{pipe} in a drawing bounded by two vertical lines $l$ and $r$ is a subpath of an edge of $T$ drawn as an $x$-monotone curve between these lines. Observe that a pipe can always be drawn as a straight line connecting the lines $l$ and $r$.

    If $D$ is a vertex spine disk or vertex skew spine disk then there is a trivial, straight-line, optimal drawing of $D$. These disks form the building blocks of our drawings. The following proposition shows how more complicated (skew) spine disks can be decomposed into less complicated ones and eventually into vertex (spine) disks.


    \begin{proposition}[Spine Decomposition] \label{l:bpdrawing}
    Let $D$ be a spine (resp. skew spine) local disk with $b\geq 0$ boundary edges on one side and $b$ (resp. $b+1$) boundary edges on the other side.
    If $D$ is not a vertex disk and $D$ has a drawing of height $H$, then $D$ has a drawing of height at most $H$ that can be decomposed as one of the cases of Figure~\ref{fig:spinedecom1} (resp.~\ref{fig:spinedecom2}), up to horizontal and vertical reflection. In these drawings $m, a_i, c_j$ are non-negative integers. 
    \end{proposition}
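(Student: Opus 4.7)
The plan is to start from an arbitrary drawing $\phi$ of the (skew) spine disk $D$ of height $H$ and transform it, at no increase in height, into one whose structure matches one of the templates in Figure~\ref{fig:spinedecom1} (resp.~\ref{fig:spinedecom2}). The very first step will be to invoke Lemma~\ref{l:simpli} to replace $\phi$ by a balanced and simplified drawing $\phi'$ of height at most $H$, which by Lemma~\ref{l:finiteness} has a bounded move sequence. This normalization lets us reason about the horizontal ordering of moves in a purely combinatorial way, and guarantees that the only remaining bends are ones that genuinely cannot be locally straightened, so any further cut we introduce will correspond to a real topological feature of~$D$.

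Next, I would analyze the drawing by following the spine $P$ of $D$ from one of its extremal vertices. Let $v$ be an extremal vertex of $P$, and let $T'$ be the sub-tree anchored at~$v$ that is not on the spine side. Whenever such a sub-tree contains the globally leftmost or rightmost point of~$T_D$, Lemma~\ref{l:bubble} lets me bubble it into an arbitrarily narrow vertical strip without increasing the height. After bubbling all sub-trees for which this applies at both extremal vertices of the spine, the remaining non-bubbled portion of the drawing consists of the spine itself, the bundle of boundary edges (matched by the bijection $\alpha$ in the spine axiom), a single extra boundary edge in the skew case, and pipes. I would then locate a shortest cut (a shortcut) that separates the first spine vertex (and any sub-tree bubbled at it) from the rest of $D$, and invoke Lemma~\ref{l:pausing} to make this shortcut vertical. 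This cut must be a shortcut because $\alpha$ matches opposite boundary edges, so any cut separating the first spine vertex from the remainder must at least cross all the corresponding pipes, and the cut I choose attains that lower bound.

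The heart of the argument is then a careful case analysis on the local combinatorics at $v$: how many boundary edges are incident to $v$ versus to the rest of $P$, whether $v$ itself sits at an extremal $x$-coordinate, whether a skew boundary edge is attached to $v$ or elsewhere, and whether the shortcut just described isolates a vertex disk or a thinner (skew) spine sub-disk. In each case, I would verify that the sub-disk cut off is itself a valid (skew) spine disk (by checking the three axioms and, in particular, that the induced bijection between left and right boundary edges is preserved) and then match the resulting picture — a vertex with pipes and bubbles on one side, separated by a vertical shortcut from a smaller (skew) spine disk on the other — to one of the patterns in Figure~\ref{fig:spinedecom1} or~\ref{fig:spinedecom2}. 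The reflection clause in the statement absorbs the two choices of extremal endpoint and the top-bottom symmetry of bubble placement. The main obstacle I expect is exactly this enumeration: proving that the listed patterns are exhaustive, i.e.\ that no other configuration of the spine, bubbles, pipes, and the single extra boundary edge of the skew case can survive after simplification and bubbling. This requires checking that the extra skew boundary edge always lands on an extremal spine vertex (giving rise to the trapezoidal variants) and that strong balancedness prevents degenerate configurations where a vertical shortcut would fail to cleanly split $D$ into two valid sub-disks.
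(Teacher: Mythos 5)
Your opening move — applying Lemma~\ref{l:simpli} and Lemma~\ref{l:finiteness} to normalize the drawing — matches the paper's first step exactly, and the tools you list (Lemma~\ref{l:bubble} for bubbling and Lemma~\ref{l:pausing} for pausing at a shortcut) are indeed the right ones. However, the organizing principle of your case analysis is different from the paper's, and as stated it leaves several essential configurations unhandled.

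The paper's analysis is not driven by the \emph{extremal vertices of the spine} $P$, but by the \emph{leftmost and rightmost moves} $t_l$ and $t_r$ of the drawing. This distinction is not cosmetic. The points $t_l$ and $t_r$ may live in a sub-tree anchored at an \emph{interior} vertex of $P$, may be bends in the interior of a spine edge, or may be interior vertices of $P$. Your plan to peel off the first spine vertex behind a shortcut cannot produce the decomposition in all these situations; in particular, you never address the case where $t_l$ and $t_r$ lie in the \emph{same} sub-tree $T_l = T_r$ anchored at an endpoint of $P$ (the paper's Case~1.1.2), in which no cut of the spine is made and instead the spine must be \emph{extended} to a longer path $Q$ — this is precisely one of the templates in Figure~\ref{fig:spinedecom1} that your sketch cannot reach. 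Similarly, the case where an extreme move is a bend on a spine edge (Case~1.2) produces a cut that crosses a spine edge and yields two skew disks and a chain of spine disks, rather than cutting off a single vertex. And the case where an extreme move is an interior vertex of $P$ must be ruled out by the balancedness assumption (Case~1.4), not by bubbling.

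Your intermediate claim that ``after bubbling all sub-trees\ldots the remaining non-bubbled portion consists of the spine, the boundary edges, and pipes'' is not achievable with the cited tools: Lemma~\ref{l:bubble} only bubbles a sub-tree that contains the current global extremum, so sub-trees anchored at extremal spine vertices but not containing $t_l$ or $t_r$ cannot be bubbled by it, and after one bubbling the extrema move so an induction is needed rather than a one-shot cleanup. The paper never aims for such a fully cleaned-up picture; it performs a \emph{single} decomposition step (one bubbling and/or one family of shortcuts) per invocation, and the enumeration of templates is proved to be exhaustive by the $t_l,t_r$-driven case split. To repair your proposal you would need to reorganize around $t_l,t_r$ and explicitly treat the spine-extension case, the bend-on-spine case, and the balance-contradiction cases.
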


    \begin{figure}
        \centering
        \includegraphics[scale=1.1]{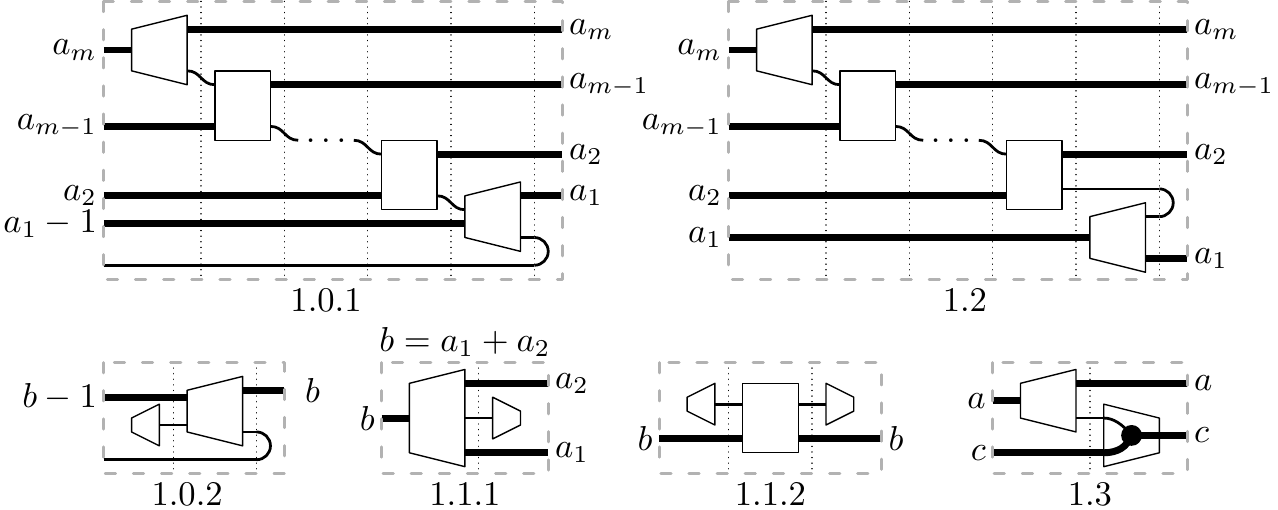}
        \caption{Decomposition of spine disks. Thick lines indicate bundles of parallel edges. The number of parallel edges in bundles are indicated by the labels on the sides. The values $a_i$ can be $0$. Rectangles indicate spine disks and trapezoids indicate skew spine disks. A black dot indicates a vertex disk.}
        \label{fig:spinedecom1}
    \end{figure}
    \begin{figure}
        \centering
        \includegraphics[scale=1.1]{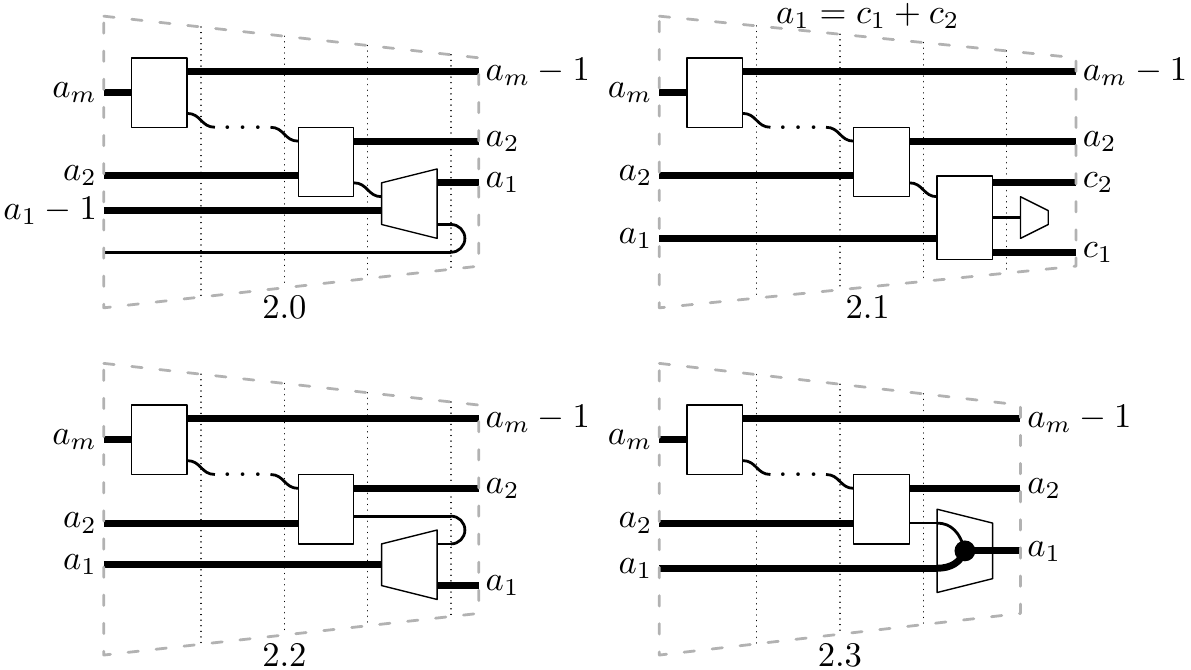}
        \caption{Decomposition of skew spine disks.}
        \label{fig:spinedecom2}
    \end{figure}
 
\begin{proof}
        In the case of skew spines, we consider only the case where the left boundary has height $b+1$ (the other case is symmetric). Using Lemma~\ref{l:simpli} we can assume that the drawing $h$ is simplified and balanced.
        Let $t_l$ and $t_r\in T_D$ be the points corresponding to the leftmost and rightmost move inside $D$ in $h$, respectively.
        Because in a (skew) spine disk, any path that starts and ends on the same (left or right) boundary contains at least one vertex, $t_l$ will be either a vertex or a left bend, and $t_r$ will be either a vertex or a right bend. For any side of the disk for which $b=0$, the corresponding first move can only be a vertex move. For instance, any drawing of the whole tree starts and ends with a veretx move.

        If $t_l=t_r$ then the (skew) disk is a vertex disk and the drawing is monotone and $t_l$ and $t_r$ equal the single vertex move. We assume therefore $t_l \neq t_r$.
        
        The argument now breaks into multiple cases. Case 0) is common to both spine and skew spine disks. 
        
        \subparagraph{Case 0) $t_r$ or $t_l$ is in the interior of a boundary edge incident to an internal vertex of $P$.}
        Assume $t_r$ lies in the interior of a boundary edge $e$. If $e$ is a right boundary edge, then there is a bigon that is bounded by $e$ and the line $x=x(t_r)$. This bigon has to be empty. There is therefore a left bend on the boundary of the bigon. This is a non stuck left bend and there is a simplifying vertex-bend separation or a bend-bend separation or a bend-bend cancellation would be possible. The case depends on what type of move is to the immediate right of the left bend. This contradicts the fact that the drawing is simplified. So we can assume that $t_r$ and (symmetrically) $t_l$ are not on right and left boundary edges, respectively.

        
        Now assume $e$ is a left boundary edge. We claim that $e$ is incident to one of the extreme vertices of $P$. See Figure~\ref{fig:case0}. Indeed, there is a disk $B$ bounded by the line $x=x(t_r)$ and part of $e$ and a right boundary edge opposite $e$ (or by a path in $P$ and then a right boundary edge, if $e$ has no opposite edge, in case of a skew disk). By the third condition of a spine disk, $B$ cannot contain any vertex of $P$ in its interior. Therefore, the endpoint of $e$ on $P$ cannot be an interior vertex of the path $P$. Thus $e$ is incident to an extreme vertex of $P$ and we are not in the current case.

        \begin{figure}[ht]
            \centering
            \includegraphics{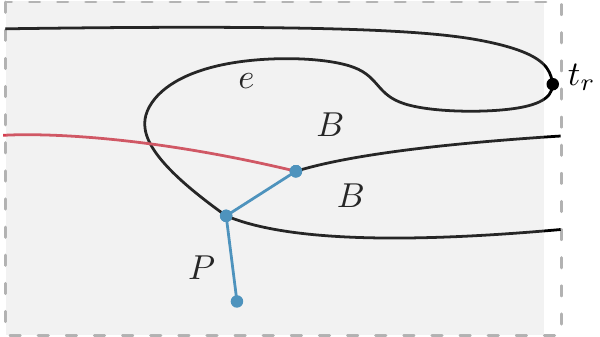}
            \caption{If the bigon $B$ contains a vertex of the spine then there will be an intersection. The red edge creates an intersection. Blue lines denote the boundary of the spine disk. Grey color shows the area where the drawing is unknown (other than some parts which are drawn).}
            \label{fig:case0}
        \end{figure}
        
        \subparagraph{Case 1) $D$ is a spine disk.} 

        Denote by $T_r$ and $T_l$ be sub-trees anchored on $P$ containing $t_r$ and $t_l$ (whenever defined), or anchored by an edge containing $t_r$ and $t_l$, respectively. Let $v_l$ and $v_r$ be vertices to which $T_l$ and $T_r$ are anchored. If $t_l$ is on $P$ and is a vertex let $v_l=t_l$, and if $t_l$ is in the interior of an edge let $v_l$ be the endpoint of that edge which is not on the path from $t_l$ to $v_r$. Similarly define $v_r$ in these cases. Observe that, in any case, the height immediately to the right of $t_l$ and immediately to the left of $t_r$ is at least $b+1$. Therefore, by Lemma~\ref{l:pausing}, we can divide the disk at any cut of length $b+1$ between these two lines.
        
        \subparagraph{Case 1.0) $t_r$ or $t_l$ is in the interior of a boundary edge.}
        Assume, after a possible reflection across the $y$-axis that $t_r$ is in the interior to a boundary edge. Let $e$ be the boundary edge containing $t_r$. We know that $e$ is incident to an extreme vertex of $P$. By the argument in Case 0), $t_l$ is not on the same edge as $t_r$. 
        
        \subparagraph{Case 1.0.1) $v_l$ is not an endpoint of $e$.}
         There exists a cut $\lambda$ of length $b+1$ that has, the bend around $t_r$ and the segment of $e$ from $t_r$ to the left boundary, to its right and the rest of the drawing to its left, see Figure~\ref{fig:spineexamples}. In addition, there is a sequence of cuts of length $b+1$, which together with $\lambda$ divide the disk $D$ into a bend, a skew disk and a set of spine disks as in Figure~\ref{fig:spinedecom1} Case 1.0.1), or a reflection of this schematic in the $x$-axis. 
         
        \subparagraph{Case 1.0.2) $v_l$ is an endpoint of $e$.}
         If $T_l$ is defined, we can bubble the sub-tree containing $T_l$ around $t_l$ without increasing the height of the drawing (see the next case for the details of this procedure). We then divide the disk using the line that separates the bubble of $T_l$ and the line $x=x(t_r)-\epsilon$ into two disks and a bend, see Figure~\ref{fig:spineexamples}. This gives us the schematic in Figure~\ref{fig:spinedecom1}, Case 1.0.2, or one that is obtained by a reflection along the $x$-axis. 
         If $T_l$ is not defined, that is, if $t_l$ is on $P$, then it has to be that $t_l=v_l$.
         We are in Case 1.3) after a reflection across the $y$-axis. So we do analogously to that case. 
         
         From this point on, we assume that $t_r$ and $t_l$ are not in an interior of a boundary edge.
    
        \subparagraph{Case 1.1) $t_r$ and $t_l$ are not on $P$.}
    
        \subparagraph{Case 1.1.1) $T_l \neq T_r$.}
        Let $F$ be the tree that is the union of the spine $P$, the sub-trees $T_{l}$ and $T_{r}$ and the two edges from $v_l$ and $v_r$ to the roots of $T_l$ and $T_r$, respectively. We apply Lemma~\ref{l:bubble} to the drawing of the tree $F$ to bubble the sub-tree $T_r$ around $t_r$. Let $\mathring{h}_F$ be the resulting drawing. From properties i), iii) and iv) of the bubbling procedure follows that if we replace $h|_F$ with $\mathring{h}_F$, the resulting drawing is well-defined. Moreover, observe that there exist $b$ edge-disjoint paths between the boundaries of the disk $D$ which are disjoint from $F$. It follows that the height of $\mathring{h}_F$ is at most $H-b$. Therefore, the height of the new drawing is at most $H$. Consider the line $\lambda$ of height $b+1$ that cuts out the bubble.
    
        We divide the disk $D$ into two sub-disks using $\lambda$. The left sub-disk is a skew disk, the right sub-disk contains a bubble. Thus we have Case 1.1.1) in Figure~\ref{fig:spinedecom1}.

        \subparagraph{Case 1.1.2) $T_l = T_r$.}
        Let $v:=v_l=v_r$ and $S:=T_l=T_r$ be the common sub-tree containing $t_l$ and $t_r$. We claim that $v$ is an endpoint of the path $P$. Otherwise, the third property of the spine disk implies that the image of $S$ lies in a sub-disk of $D$ bounded by edges of $P$, two left or two right boundary edges and the left or the right boundary of $D$. The two extreme points in the drawing of $D$  (i.e. $t_l$ and $t_r$) cannot be in the interior of this sub-disk. Thus our claim is proved.
        
        Let $P_l$ be the path from $v$ to $t_l$ and define $P_r$ analogously. Consider the the maximal path $P'$ contained in $P_l \cap P_r$. By assumption, the length of $P'$ is at least 1. Let $Q=P \cup P'$. Let $T'_r$ and $T'_l$ be the sub-trees anchored at $Q$ containing $t_r$ and $t_l$ in them, or in their anchor edge, respectively. Now observe that we can use Lemma~\ref{l:bubble} to bubble $T'_r$ and $T'_l$ just as in the last case. We then move the two boundaries over the two bubbles of $T'_r$ and $T'_l$. This gives Figure~\ref{fig:spinedecom1}, Case 1.1.2) or a figure obtained via a reflection along the $x$-axis. Note that the inner spine disk now has a longer spine path.

        \subparagraph{Case 1.2) $t_r$ is on $P$ and is not a vertex.}
        In this case, $t_r$ defines a bend in an edge $e$ of $P$. The line $x=x(t_r)-\epsilon$, for sufficiently small $\epsilon >0$, has weight $b+2$. Consider again $t_l$. Let $\hat{D}$ be the disk between the lines $x= x(t_r)+\epsilon$ and $x=x(t_r)-\epsilon$.

        If we remove the point $t_r$ the tree in $D$ (including the boundary edges) is disconnected into two parts. One of these parts contains $t_l$. We can then find a cut in $\hat{D}$ that cuts the left boundary edges of the component not containing $t_l$ and the right boundary edges of the other component. See Figure~\ref{fig:spineexamples}. This cut lies in the interior of $\hat{D}$ and has length $b+1$. Therefore, it is a shortcut. There exist now a sequence of cuts that can be used to divide $\hat{D}$ and thus $D$ into two skew spine disks and a sequence of spine disks as in Figure~\ref{fig:spinedecom1}, Case 1.2), or a similar schematic obtained via a reflection along the $x$-axis.
        
        \subparagraph{Case 1.3) $t_r$ is an extreme vertex of $P$.}
        Consider the line $\lambda$ with equation $x=x(t_r)-\epsilon$. This line has height at least $b+1$. If its height is $b+1$ then we are in the situation of Figure~\ref{fig:spinedecom1}, Case 1.3) or a similar schematic where the vertex disk is on top. Otherwise, the height of $\lambda$ is at least $b+2$. This contradicts the assumption the input drawing is balanced.
        
        
        \subparagraph{Case 1.4) $t_r$ is an internal vertex of $P$.}
        Consider the line $\lambda$ with equation $x=x(t_r)-\epsilon$. This line has height at least $b+2$. We can take one of the incident edges of $P$ on $t_r$ from left and bend it to be incident to $t_r$ from right, while not increasing the maximum height. This contradicts the assumption that the drawing is balanced. 
    
        \subparagraph{Case 2) $D$ is a skew spine disk.}
        Recall that a skew spine disk has a vertex on the spine $P$ where the incident left boundary edges are more than incident right boundary edges by one. We denote that vertex by $w$.

        \subparagraph{Case 2.0) $t_r$ is on a boundary edge.}
        By the arguments of Case 1.0), $t_r$ is on a boundary edges that is incident to a extreme vertex $v$ of the path $P$. Consider the line $x=x(t_r)-\epsilon$ and let $\hat{D}$ be the disk between the left boundary of $D$ and this line. Every cut of length $b+1$ in $\hat{D}$ is a shortest cut and can be used to cut the disk $\hat{D}$. There is one such shortcut for every edge on the path from $v$ to $w$, see Figure~\ref{fig:spineexamples}. Using these cuts, we obtain the schematic in Figure~\ref{fig:spinedecom2} Case 2.0) or a reflection of it. If $v=w$, we obtain a schematic that is obtained from this one, by deleting all the rectangles and connecting edges, other than the edge that connect the inner skew disk to a rectangle, this edge will be added to the left boundary. Namely, the schematic would contain a bend and a skew spine disk oppositely oriented. 
        
        From this point on, we assume that $t_r$ is not in an interior of a boundary edge.

        \subparagraph{Case 2.1) $t_r$ is not in $P$.} 
        Let $T_r$ be the sub-tree anchored at $P$ containing $t_r$ or which $t_r$ is in its anchor edge. We can think of the point where the extra left boundary intersects the left boundary as $t_r$. Then we can apply Lemma~\ref{l:bubble} to bubble the sub-tree $T_r$ around $t_r$. Consider the line $\lambda$ of height $b+1$ that separates the bubble in the new drawing. Let $\hat{D}$ be the disk between the left boundary of $D$ and $\lambda$. Let $v$ be the vertex to which $T_r$ is anchored. Also, let $m$ be the number of vertices on the path $v \cdots w$. If $m>1$, there are $m-1$ short cuts of length $b+1$ in $\hat{D}$, see Figure~\ref{fig:spineexamples}. These plus $\lambda$ divide the disk $D$ in a sequence of disks depicted in Figure~\ref{fig:spinedecom2} Case 2.1) up to a reflection along the $x$-axis. Other than two extreme spine disks in this schematic, other spine disks contain a single vertex in their spines. If $m=1$, we simply use $\lambda$ to cut the disk into a bend and a rectangle. The schematic is the same as before with $m=1$. The edge that connect the lowest rectangle to the the higher rectangle now will be added to the $a_1$ left boundary edges.
        
        \subparagraph{Case 2.2) $t_r$ is on $P$ and is not a vertex.}
        Then $t_r$ is a bend on $P$ and the line $\lambda$ with the equation $x=x(t_r)-\epsilon$ has height $b+2$. It follows that we can divide the disk $\hat{D}$, between the left boundary of $D$ and $\lambda$, by any cut of length $b+1$. Let $v$ be the first vertex that lies on the path from $t_r$ to $w$ and let $m$ vertices be on this path. There is a cut of length $b+1$ in $\hat{D}$ which cuts the edge containing $t_r$ once, as in Figure~\ref{fig:spineexamples}. There exist $m-1$ other cuts of length $b+1$, each cutting a distinct edge on the path from $v$ to $w$. We can cut the disk $\hat{D}$ into the schematic depicted in Figure~\ref{fig:spinedecom2} using these cuts, or a symmetric one obtained via a reflection across $x$-axis. If $m=1$, we do similar to the previous two cases.
        
        \subparagraph{Case 2.3) $t_r$ is an extreme vertex of $P$.}
        Consider the line $\lambda$ with equation $x=x(t_r)-\epsilon$. This line has height at least $b+1$. If its height is $b+1$ then we are in a situation very similar to the above cases and we obtain the schematic in Figure~\ref{fig:spinedecom2} or a symmetric one. If the height of $\lambda$ is at least $b+2$ we again reach a contradiction.
        
        \subparagraph{Case 2.4) $t_r$ is an internal vertex of $P$.}
        This case is analogous to Case 1.4).
    \begin{figure}
        \centering
        \includegraphics[scale=0.52]{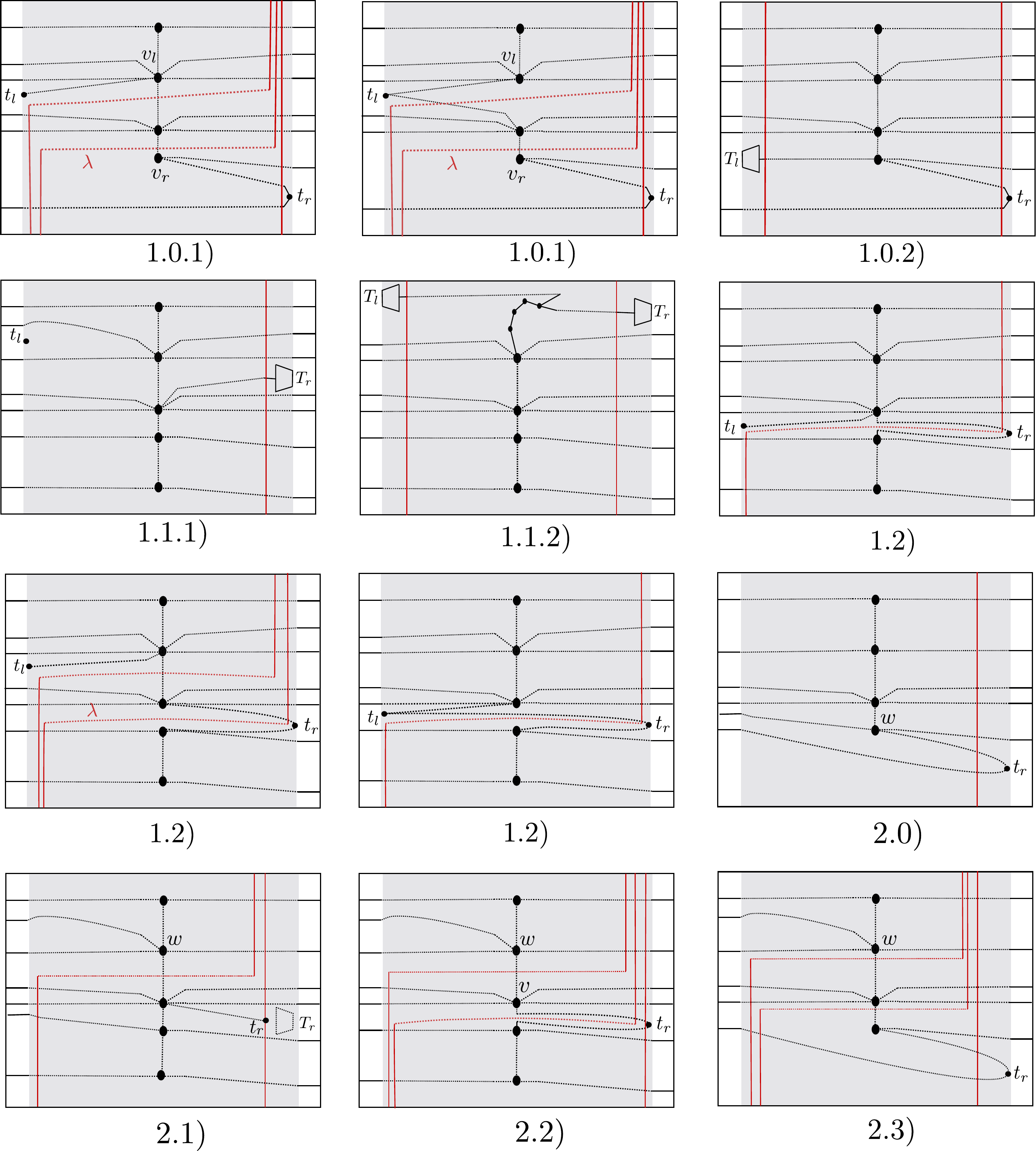}
        \caption{Examples of local disks appearing in cases of Proposition~\ref{l:bpdrawing}, a dashed line shows that we do not know of the exact drawing of an edge, however we know the points that an edge or a segment connects. The dots labeled $t_l$ and $t_r$ show the location of the leftmost and rightmost moves, respectively. The parts of the tree on which they lie are sometimes not drawn for reducing clutter. We also have depicted the bubbled sub-trees.}
        \label{fig:spineexamples}
    \end{figure}
    

\end{proof}

\subsection{Life of a spine disk}
A spine disk is ``born'' either at the very beginning of the decomposition when the spine path is a single vertex or is born, where it is cut out of another spine disk. The second possibility happens in the cases 2.0), 2.1), 2.2), 2.3),  1.0.1), 1.0.2), 1.2) or 1.3). These correspond to the rectangles in the diagrams of these cases. In each case, all but possibly one of the born spine disks has a single vertex.

After a spine disk is born, it grows in two ways. First, the spine path might become longer using the case 1.1.2). Second, new boundary edges might appear attached to the same spine path. This happens by case 1.1.1) followed by 2.1) where there is a single spine disk in the skew spine. Other cases will cut the spine path, or change the left and right boundary edges of the spine disk.

Now if we consider all spines with a fixed path $P$ and fixed four extreme boundary edges, after one such spine is generated, the only possible operations on it are adding boundary edges, that is case 1.1.1) followed by 2.1). Note that each pair of these moves adds a pair of left and right boundary edges incident to the same vertex, with bubbles at their ends. Adding pairs of boundary edges to the spine finishes either when the next move of the spine disk is one of 1.0.1), 1.0.2), 1.1.2) and 1.3) or the spine disks turns into a skew disk and the path is cut in case 2.1) or otherwise the skew disk undergoes other cases than 2.1). In brief, when a spine disk with the spine path $P$ is created, it then continues its life by adding pairs of opposite boundary edges, until either $P$ is cut or is extended to a longer path. If $P$ is maximal, adding pairs of opposite boundary edges and bubbles stops when $P$ is cut into smaller spine paths.

\subsection{Structure tree}
Recursive applications of Proposition~\ref{l:bpdrawing} to an optimal-height drawing of a spine or a skew spine disk, for instance one containing all of $T$, result in an optimal drawing that has a hierarchical structure. 
Any node in the hierarchy is a spine or skew spine disk, and a node is decomposed into its children using one of the possibilities of Proposition~\ref{l:bpdrawing}.
The leaves of the hierarchy are vertex disks.
We call this hierarchy a \emph{structure tree} of the optimal drawing. We call a drawing which has such a structure tree a \textit{structured drawing}. For instance the drawings output by Proposition~\ref{l:bpdrawing} are structured drawings.

\section{Optimizing the optimal-height drawings}

In this section, we first define the quality of drawings. We then consider those optimal drawings that maximize this quality measure.

\subsection{Quality of a drawing}
Let $\phi$ be any drawing of a local disk $D$. We denote by $\Lambda' = \Lambda'(\phi) = \{\lambda_i\}$ the set of combinatorially distinct vertical lines in the plane (that lie in general position with the drawing). That is, the strip $S_i$ bounded by $\lambda_i$ and $\lambda_{i+1}$, after removing pipes, is either: i) a vertex move, 
that is, contains a single vertex and no bends, or ii) contains a single bend and no vertices. Such a set of vertical lines can be chosen for any drawing $h$ in general position.
Consider a strip $S_{ij}$ bounded by $\lambda_i$ and $\lambda_j$. If $S_{ij}$ contains only pipes we remove $\lambda_i$ or $\lambda_j$ (whichever is to the right of the other) and all the lines in between form $\Lambda'$, and repeat this operation. Let $\Lambda= \Lambda(\phi)$ be the remaining set of vertical lines. We again consider strips $S_{ij}$ bounded by $\lambda_i$ and $\lambda_j$ in $\Lambda$.  
If after removing pipes from the strip $S_{ij}$ it becomes a bubble, spine disk, skew spine disk, or bend, we respectively say that $S_{ij}$ is a bubble, spine disk, skew spine disk or bend. Recall that a bubble is a special type of skew spine disk with only one boundary edge in one side.

Note that bubbles are either disjoint or nested and therefore give rise to a hierarchical structure.
We say that a vertex $v$ is of \emph{depth} $d$ if it is contained in exactly $d$ bubbles. That is, there are exactly $d$ strips, bounded by the lines of $\Lambda$, that contain the vertex and that are bubbles.
We define the \emph{depth} of a bubble and a (skew) spine disk analogously to depth of a vertex to be the number of bubbles that properly contain them.

We say that a line $\lambda\in \Lambda(\phi)$ is of depth $i$ if it is contained in exactly $i$ strips which are bubbles, and in none of them it is a boundary. For instance, lines of depth 0 do not cut any bubble. Let $\Lambda_i = \Lambda_i(\phi)$ denote the set of lines of depth $i$. 
Let $\Lambda_{i,j}=\Lambda_{i,j}(\phi) \subset \Lambda_{i}$ be the set of lines of height $j$ and depth $i$, and let $\delta_{i,j}(\phi)= |\Lambda_{i,j}|$ be the number of lines of depth $i$ and height $j$. Note that $\delta_{i,j}(\phi)$ can be 0. Let $\Delta_i(\phi)$ be the sequence $(\delta_{i,0}(\phi), \delta_{i,1}(\phi), \ldots)$,
and define the \emph{quality} of the drawing $\phi$ as
\[Q(\phi) = ( \Delta_0(\phi), \Delta_1(\phi), \ldots).\]

For two drawings $\phi$ and $\phi'$, we compare their complexities $Q(\phi)$ and $Q(\phi')$ lexicographically, where we also compare the sequences $\Delta_i(\phi)$ to $\Delta_i(\phi')$ lexicographically.
Specifically, a drawing of maximum quality maximizes the depth sequences $\Delta_i$ from left to right.
That is, we are interested in the drawings where the sequence $\Delta_0$ is maximized, and among these the sequences where $\Delta_1$ is maximized, and so on.
We emphasize that maximizing the quality does not necessarily minimize the height of the drawing. Instead, we merely use the quality measure to reduce the search space for minimum height drawings..

There remains still some arbitrariness in optimal drawings with maximum quality. For instance, a star with $2k$ leaves and a central vertex can be drawn with optimal height and with maximum quality in an exponentially many different ways, giving rise to exponentially many spine disks, by changing the order of the vertices. We get rid of these choices using the notion of secondary quality to be defined in Section~\ref{ssec:purturbed}.

By Lemma~\ref{l:simpli}, there exists an optimal simplified and balanced drawing of any local disk $D$, and by Lemma~\ref{l:finiteness}, $(H+1)n$ is an upper bound on the complexities of simplified drawings with height $H$. Therefore, the set of quality sequences of the set of all optimal, balanced and simplified drawings of a local disk is non-empty.
Moreover, each quality sequence for such a drawing consists of at most $H(H+1)n$ terms $\delta_{i,j}$, since the depth is at most the number of lines in $\Lambda$ and each depth-sequence $\Delta_i$ contains at most $H=O(n)$ different height values.

\subsection{Properties of drawings with maximum quality}

The following lemma implies that by restricting the dynamic programming to disks that can appear in a minimum quality drawing we do not lose any (skew) spine disks necessary for computing the optimal height. We need this since the dynamic program only constructs structured drawings.


\begin{lemma}\label{l:comdecom}
Let $\phi$ be a simplified, balanced drawing of a spine disk $D$ that has maximum quality $Q(\phi)$ over all drawings with the same height as $\phi$. Then $\phi$ is a structured drawing.
\end{lemma}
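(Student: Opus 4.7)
The plan is to argue by strong induction on the number of interior vertices of $D$, jointly handling spine and skew spine disks. The base case (a vertex disk) is immediate. For the inductive step, the goal is to establish that $\phi$ itself admits a top-level decomposition matching one of the cases of Figure~\ref{fig:spinedecom1}. The restrictions of $\phi$ to the resulting sub-disks will then inherit the hypotheses---they are simplified and balanced by locality, and max-quality for their own heights, since any local improvement would lift to an improvement of $Q(\phi)$---so by the inductive hypothesis each restriction is structured, and splicing the structure trees yields one for $\phi$.

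To produce the top-level decomposition we would follow the case analysis from the proof of Proposition~\ref{l:bpdrawing}, with the case selected by the locations $t_l,t_r$ of the leftmost and rightmost moves of $\phi$. Each case exhibits a family of shortcuts (and in Cases 1.1 and 2.1 also a sub-tree to be bubbled) whose vertical realization cuts $D$ into the prescribed configuration. The central claim is that in a max-quality $\phi$, every such shortcut is already realized as a vertical line of $\phi$ and every such sub-tree is already bubbled, so no modification of $\phi$ is needed.

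The main obstacle is verifying this claim. Suppose some identified shortcut $C$ is not vertical in $\phi$. By Lemma~\ref{l:pausing} there is a drawing $\phi'$ of height at most $H(\phi)$ in which $C$ is vertical and in which the vertical lines of $\phi$ disjoint from $C$ survive unchanged. We will show $Q(\phi')>Q(\phi)$, contradicting maximality. Let $d$ be the depth at which $C$ sits in $\phi$. Since $|C|$ is minimal among cuts at the combinatorial position of $C$, pausing produces a combinatorially new vertical line of height $|C|$ at depth $d$; moreover pausing only inserts pipe-only strips and preserves lines disjoint from $C$, so no entry $\delta_{d',j}$ with $d'<d$ decreases. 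Because $Q$ is compared lexicographically from smallest depth upward, any change at depth greater than $d$ is then irrelevant, yielding a strict increase of $Q$. The analogous argument for bubbling in Cases 1.1 and 2.1 uses Lemma~\ref{l:bubble} to shrink the $x$-extent of the relevant sub-tree, exposing a low-height line at the depth of the bubble's exterior which likewise strictly boosts $Q$ lexicographically. Once every required shortcut and bubble is verified to appear in $\phi$, the top-level decomposition is in the prescribed form, and the induction closes.
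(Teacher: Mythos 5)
Your core idea --- that pausing at a shortcut or bubbling a sub-tree strictly increases $Q$ lexicographically, contradicting maximality --- is exactly the paper's argument, and your lexicographic bookkeeping for the shortcut case is essentially right (though you should also argue that entries $\delta_{d,j}$ for $j<|C|$ at the \emph{same} depth $d$ do not decrease, not only those at depths $d'<d$; the paper notes that only the boundary of the cut strip has height below $b'$). However, there is a genuine gap that the paper addresses at some length and your proposal omits entirely: after pausing or bubbling, the resulting drawing may no longer be balanced or simplified, and the max-quality hypothesis is only useful if the competitor drawing lies in that same class. The paper's proof therefore carefully applies balancing moves and non-strong simplification moves to the modified drawing and argues, case by case, that these restoration moves never destroy the strict quality gain (e.g.\ bend-bend cancellations cannot collapse the newly created shortcut line because there are vertices between adjacent shortcuts, and a stuck-slide is always followed by a vertex-bend separation that nets an improvement). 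Without this step the contradiction does not close in the setting where the lemma is actually used (the fat-structure construction compares among simplified, balanced drawings).

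Secondly, your inductive scaffolding is not needed and its key lift is unjustified. You assert that a restriction $\phi|_{D'}$ is max-quality for the sub-disk $D'$ ``since any local improvement would lift to an improvement of $Q(\phi)$,'' but this is not immediate: the depth-$d_0$ slot of $\Delta(\phi)$ also contains lines outside $D'$, and the local-to-global lift requires showing that replacing $\phi|_{D'}$ leaves the lexicographically earlier global entries untouched and strictly improves a later one, which is not obviously implied by a local lexicographic improvement of $Q(\phi|_{D'})$. The paper avoids this entirely by arguing about the \emph{global} quality of $\phi$ at every depth directly: the decomposition algorithm is applied to the whole drawing and at each level the quality comparison is with respect to $Q(\phi)$, not a restricted quantity. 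I'd recommend dropping the induction-on-sub-disks framing and instead, for each operation the decomposition algorithm would perform at any depth, show directly that it strictly raises $Q(\phi)$ after restoring balance and simplification.
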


\begin{proof}
We apply the spine decomposition algorithm to the input drawing $\phi$ and show that if the algorithm changes the drawing, then the quality decreases, while we maintain the balanced and simplified properties of the drawing. Observe that the decomposition algorithm changes the drawing in two ways. Either by bubbling, or by making a shortcut of a sub-drawing vertical using Lemma~\ref{l:pausing}. 

Assume at some stage the algorithm bubbles a sub-tree from a disk $D$. Without loss of generality we assume the bubble is on the right side of the disk. Assume the disk has $b$ right boundary edges and is at depth $d_0$, and there are $p$ pipes above and below this disk in the drawing. First, note that no vertical line of depth $d_0$ increases its height by the bubbling procedure. 
Second, as in Figure~\ref{fig:bubbling}, a new line of height $p+b+1$ is introduced to the vertical lines of depth $d_0$ (If the bubble does not exist already. This line is the line at the center of the right part of Figure~\ref{fig:bubbling}). There are two lines of height $p+b$ in the disk $D$, namely the boundaries and the number of these lines does not change. 
If we show that no original lines of height $p+b+1$ decreases its height, we could deduce that the bubbling has improved the quality, since the new drawing now has one more lines of height $p+b+1$ in depth $d_0$, while the numbers of lines of lower depths and those with depth $d_0$ and lower heights are not changed. But this is trivial since after bubbling there is still exactly two lines of height $p+b$ namely the boundaries of the new drawing, and no line in the interior of the right hand side of Figure~\ref{fig:bubbling} has decreased its height to $p+b$. In other words, every vertical line that originally crossed any edge other than boundary edges of $D$, still crosses some such edge edge. Note that these might have become equal to each other. In brief, bubbling increases the quality.


Since the drawing inside the bubble is a (copy of a) subset of the original drawing, it is balanced.
It is easily checked that the union of the bubble and the rest of the drawing is also balanced. It is also easily checked that other than bend-bend cancellation and stuck-slide, other simplifying moves replace lines with lines of lower height increasing the quality. The lines in a strip that contains a bend-bend cancellation do not count toward quality. The stuck-slide keeps the number of lines with any height fixed. However, after such a move always a vertex-bend separation or another stuck-slide is possible, namely involving the vertex or bend which is the cause of the stuck. If we encounter a bend which causes a stuck but it itself is non-stuck it has to be incident to a vertex (as in the top right of Figure~\ref{fig:simplification}). This vertex defines a vertex-bend separation with the original bend. This will guarantee that eventually the quality increases. Therefore, applying non-strong simplification moves to make the new drawing simplified only improves the quality, while keeping the drawing balanced. Since the resulting drawing would be simplified, balanced and with better quality, we reach a contradiction. Therefore the algorithm does not change the drawing by bubbling.


Next consider pausing at a shortcut in some strip $S$ and assume that the depth of the disk which is cut with shortcuts is $d_0$.
In all cases of Figures~\ref{fig:spinedecom1} and~\ref{fig:spinedecom2}, the vertical lines separating the blocks in the figures are maximal disjoint sets of shortcuts uniquely defined by knowing the edges and vertices of $t_l$ and $t_r$. These cuts have some height $b'$. When applying the decomposition algorithm, we cut the disk at all cuts of length $b'$ that are already vertical. If no cut remains, then this step of the algorithm does not change the drawing. Otherwise, there is a remaining shortcut that is disjoint from all vertical cuts of length $b'$. Pausing at the shortcut then increases the number of lines of height $b'$ and depth $d_0$ (Lemma~\ref{l:pausing}) without changing those of smaller height (since only the boundary of the strip of $D$ is shorter than $b'$) or smaller depth. 
After pausing at a shortcut, we might have made the drawing non-balanced and non-simplified. 
We apply balancing and non-strong simplifying moves. We first apply balancing moves, if necessary. The balancing move, Figure~\ref{fig:balance}, does not decrease the quality, it only adds one line which increases the quality. The bend-bend cancellation makes two lines equal. However, the two of them cannot be our shortcuts since between our shortcuts there are vertices. It follows that cancelling bends does not make the number of lines of height $b'$ smaller (in other words, the line of height $b'$ exists in $\Lambda$). Also, the non-strong simplifying moves only improve the quality. Therefore, the result would be simplified and balanced everywhere and has a better quality than $h$ which is again a contradiction.

It follows that the algorithm does not change the drawing.

\end{proof}


Recall that for a path $P$ the set of anchor edges, $A(P)$, is the set of edges which have exactly one endpoint on the path $P$. Also, the set of anchor edges of a spine disk $D$, $A(D)$, is the set of anchor edges of the spine path of $D$.
Let $D$ be a (balanced) spine disk with $2b$ boundary edges and $e \in A(D)$ be an anchor edge of $D$. Let $H(D)$ denote, as always, the optimal height of the disk $D$ and $eH(T',e')$ denote the optimal exposed height of a sub-tree $T'$ with respect to the edge $e'$. Also recall that the sub-tree $T_e$ anchored by $e$ is the sub-tree rooted at the endpoint of $e$ which is not in the spine of the disk $D$. We say $e$ is \emph{light} (with respect to $D$) if the exposed height of the sub-tree $T_e$ satisfies $eH(T_e,e) \leq H(D)-b+1$. 

The significance of light edges is that if we know a boundary edge $e$ of $D$ is light then given any drawing of $D$ we can redraw the sub-tree $T_e$ near the boundary of $D$ in a small bubble without increasing the height, since the maximum height over the bubble would be $eH(T_e,e)+b-1$ which would be at most $H(D)$.

\begin{lemma}\label{l:lightedges}
Let $D$ be a spine disk of depth $d$ in a drawing $\phi$ with maximum quality $Q(\phi)$. Let $e\in A(D)$ be a light edge and $T_e$ the sub-tree anchored by $e$. If $e$ is a boundary edge then $T_e$ is drawn in a bubble of depth $d$ with $e$ as the single boundary edge. Moreover, the strip between the bubble of $T_e$ and the disk $D$ is a sequence of bubbles of depth $d$ anchored at the spine of $D$, or bends.
\end{lemma}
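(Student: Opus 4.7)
The plan is to argue by contradiction using the maximum-quality hypothesis. Suppose $\phi$ is a balanced simplified maximum-quality drawing, $e$ is a light boundary edge of $D$, but either $T_e$ is not drawn in a depth-$d$ bubble with $e$ as its only boundary, or the strip between such a bubble and $D$ does not have the asserted form. I would construct a modified drawing $\phi'$ with $H(\phi')\leq H(\phi)$ and $Q(\phi')>Q(\phi)$ lexicographically, contradicting the choice of $\phi$.

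The first step is to apply Lemma~\ref{l:bubble} to compress $T_e$ into a narrow bubble $B$ placed immediately outside the side of $D$ containing $e$, with $e$ as the single boundary edge of $B$. The lightness condition $eH(T_e,e)\leq H(D)-b+1$ is tailored exactly so that any vertical line through $B$ meets at most $eH(T_e,e)+(b-1)+p$ edges, where $p$ is the number of pipes from the outer context passing through the $x$-range of $B$. Since the vertical line immediately outside the boundary of $D$ on the $e$-side in $\phi$ already has height $b+p\leq H(\phi)$, this total is at most $H(\phi)$, and away from $B$ the drawing is untouched. Hence $H(\phi')\leq H(\phi)$.

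The second step is the quality comparison. Because $B$ lies inside the same $d$ ambient bubbles as $D$, the bubble $B$ itself has depth $d$, so the depth-$0$ through depth-$(d-1)$ structure is unaffected and $\Delta_0,\ldots,\Delta_{d-1}$ stay the same. At depth $d$, the moves of $T_e$ that previously produced depth-$d$ lines now lie inside $B$ at depth $d+1$, and the depth-$d$ lines in the connecting strip now meet only the pipe $e$, the $b-1$ other boundary pipes, and outer pipes — a strict reduction relative to lines in $\phi$ that crossed multiple edges of the spread-out $T_e$. By the lexicographic comparison defining $Q$, $\Delta_d$ strictly increases (shorter-height counts go up, taller-height counts go down), so $Q(\phi')>Q(\phi)$. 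Restoring balance and simplification via Lemma~\ref{l:simpli}, as in the proof of Lemma~\ref{l:comdecom}, does not hurt the quality and does not disturb the new bubble structure, completing the contradiction for the first clause.

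For the \emph{moreover} clause on the strip between $B$ and $D$, I would iterate the same argument: any vertex move of the spine of $D$ inside the strip, or any other unbubbled sub-tree anchored along the spine within it, yields a further light edge (its exposed height is bounded by the sub-drawing already realized inside $\phi$), so the same bubbling and quality-increase argument applies. After finitely many such steps, the strip consists only of bubbles of depth $d$ anchored at the spine, separated by bends and pipes. The main obstacle I expect is the careful depth-$d$ bookkeeping in the second step: the set $\Lambda$ itself changes (moves are added and removed), the two new boundary lines of $B$ contribute to $\Delta_d$, and one must verify that the lexicographically first changed entry of $\Delta_d$ really is an increase. This forces placing $B$ so that its outer boundary inherits a small height (height $b+p$), and requires tracking the way bubbling redistributes moves into the narrow $x$-range of $B$.
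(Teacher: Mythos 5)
Your overall strategy matches the paper's: assume the conclusion fails, redraw $T_e$ in a small bubble at the boundary of $D$ (using lightness to control the height), and derive a contradiction by showing the quality $Q$ strictly increases. However, there are two substantive gaps. First, your contradiction argument tacitly assumes the root $r$ of $T_e$ has depth at most $d$ in $\phi$; otherwise the bubble you create could \emph{lower} the depth of $T_e$'s moves, which can \emph{hurt} the lexicographic quality rather than help it. The paper devotes an explicit preliminary case analysis to this: it shows that if $r$ is already inside $d+1$ bubbles, then one of those bubbles must contain exactly $T_e$ (using the nested/disjoint property of bubbles and the fact that $D$ itself has depth exactly $d$), so the first conclusion of the lemma already holds and no bubbling is needed; only when $r$ has depth $\le d$ does the paper perform the bubbling move. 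You need this dichotomy before the quality comparison can even begin. Second, your quality comparison asserts flatly that ``$\Delta_0,\ldots,\Delta_{d-1}$ stay the same,'' but this is not justified: before bubbling, $T_e$ may have vertices at depth $d' < d$, and the lines near such a vertex are at depth $\leq d$ and will change height. The paper handles this by introducing $d' = $ the minimum depth of vertices of $T_e$, noting $d'\leq d$ thanks to the preliminary analysis, and then arguing that lines near a vertex of depth $d'$ strictly decrease in height while no line of depth $\le d$ increases and the moves of $T_e$ are shipped off to depth $\ge d+1$ — that is what makes $Q$ strictly increase lexicographically.

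Two smaller points. You cite Lemma~\ref{l:bubble} for the bubbling step, but that lemma is only about bubbling a sub-tree containing the leftmost or rightmost move; the move used here is the different ``admissible spot'' construction that exploits lightness ($eH(T_e,e)\le H(D)-b+1$ leaves room for $b-1$ boundary pipes), which your second sentence actually describes correctly — so this is essentially a citation slip rather than a reasoning error. For the ``moreover'' clause, the paper does not iterate over new light edges as you propose; it argues directly that if some vertex $w$ strictly between the bubble $B$ and $D$ is not inside a depth-$d$ bubble, then a line $\ell$ next to $w$ has depth $\le d$, and re-bubbling $T_e$ at the boundary of $D$ strictly decreases the height of $\ell$ without raising any line of depth $\le d$, again contradicting maximality of $Q$. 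It then closes by using the choice of $D$ as leftmost offending disk and the decomposition structure to conclude that all these depth-$d$ bubbles in the strip must be anchored at the spine of $D$. Your iterative sketch would likely reach the same place, but the direct argument and the ``leftmost offending disk'' device are what make the claim about anchoring provable.
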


\begin{proof}

Without loss of generality we assume that $e$ is a right boundary edge. Assume on the contrary that $T_e$ is not drawn in a bubble of depth $d$. Moreover, assume $D$ is the leftmost spine disk in the the drawing such that the lemma is not true for it.

Let $r$ be the root of $T_e$. Assume $r$ is in $k$ bubbles. Suppose that among these $k$ bubbles there are two bubbles $B_1$ and $B_2$ such that $B_1$ contains $T_e$ and $B_2$ does not contain $T_e$, as in Figure~\ref{fig:twobubbles}. By construction, the bubbles are either nested or have no vertex in their intersection. Since $B_2$ contains $r$ and not $T_e$ it contains the sub-tree other than $T_e$ anchored by $e$. If $T_e$ is in $B_2$ then $B_2$ must contain all of the tree $T$ which cannot be. Thus it has to be that $T_e$ is in $B_1 \cap B_2$. It follows that if $r$ is in $k$ bubbles and one of them contains $T_e$ then $T_e$ is in $k$ bubbles.

Note that a bubble that contains $r$ and does not contain exactly $T_e$ has to contain $T'_e$, the second sub-tree anchored at $e$. Therefore, if $r$ is in $d+1$ bubbles $d\geq1$, and none of them contains exactly $T_e$, then $D$ is in $d+1$ bubbles which cannot be. It follows that at least one of them must contain exactly $T_e$. It follows that, $T_e$ then is drawn in a bubble that is contained in $d$ bubbles and the first statement of the lemma is proved in this case, if $r$ is in $d+1$ bubbles. If $d=0$ and $r$ is in $d+1$ bubble then that bubble must contain $T_e$. It follows that in this case too the first statement is proved. In the rest of the argument we assume $r$ is not in $d+1$ bubbles with $d\geq 0$ and reach a contradiction.

\begin{figure}
    \centering
    \includegraphics{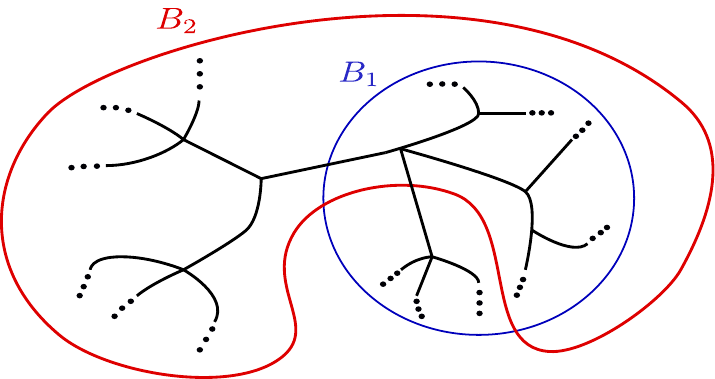}
    \caption{Intersecting bubbles.}
    \label{fig:twobubbles}
\end{figure}
\begin{figure}[b]
    \centering
    \includegraphics{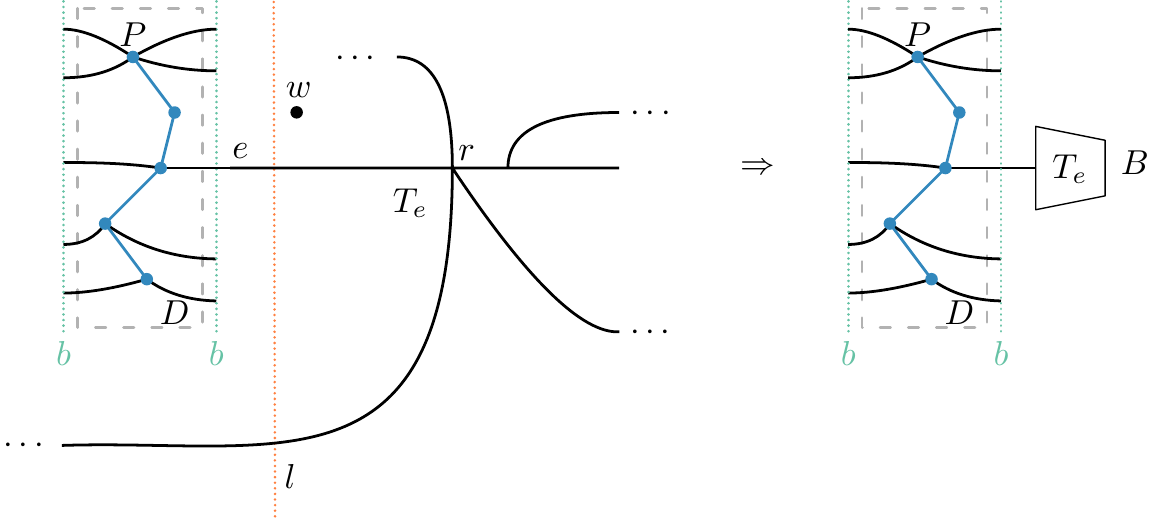}
    \caption{Bubbling sub-trees anchored via light edges.}
    \label{fig:bubbledsubtree1}
\end{figure}

Since edge $e$ is light the point where the edge $e$ intersects the boundary of $D$ is an admissible spot for the bubble of $T_e$ in the sense that we can change the drawing by drawing $T_e$ in a bubble near this point, without increasing the optimal height of the drawing, see Figure~\ref{fig:bubbledsubtree1}. 

Let $d'$ be the minimum depth of the vertices of $T_e$, thus $d'\leq d$ since $r$ is of depth at most $d$. In the new drawing $h'$, all these vertices are in at least $d+1$ bubbles, namely $d$ bubbles containing $D$ and one bubble containing $T_e$. Thus all the vertical lines that cross the bubble of $T_e$ are of depth at least $d+1$. The vertical lines of $h$ outside of the bubble of $T_e$ have decreased their heights or remained constant. It might be that two distinct lines become equal.
Certainly some of the vertical lines of depth at most $d$ has now less height, for instance the lines near the vertex of $T_e$ of depth $d'$. This implies that the new drawing has improved quality. 
It is easily checked that if the drawing of $T_e$ inside the bubble is chosen to be balanced and simplified then $h'$ is balanced and simplified. But this contradicts our assumption that $h$ has a fat structure.

To prove the second statement of the lemma, assume that there is a vertex $w$ between the disks $B$ and $D$ which is not in a bubble of depth $d$. This implies there is a line $l$ next to the vertex which is not in a bubble of depth $d$, that is whose depth is at most $d$.

Now again we change the drawing as above by drawing $T_e$ near the boundary. Consider the effect of this change in the drawing on the heights of lines. The height of the line $l$ in the new drawing has decreased by at least 1. Any other height of a line is decreased or is constant other than those lines which are inside the new bubble of $T_e$. However, these are of depth $d+1$. Therefore the new drawing has better quality which gives a contradiction.

Therefore, every vertex between $B$ and $D$ is in a bubble of depth $d$. This bubble does not contain $D$ nor $B$, thus lies in the strip between $B$ and $D$. It is easily seen by inspecting the decomposition lemma that these bubbles cannot be anchored in the opposite direction than $B$, that is, to a bubble on the right of $B$. On the other hand, if there is a bubble $B'$ that is not anchored at $D$, but at a disk to the left of it, then this contradicts our choice of $D$ and $e$. Hence all the bubbles are anchored at the disk $D$.

\end{proof}

\subsection{Breaking ties while respecting the orders}\label{ssec:purturbed}
In our arguments we will use a tie-breaking mechanism to decide between optimal drawings which all have maximum quality. We first define a perturbation of the original heights. 

\begin{figure}[h]
    \centering
    \includegraphics{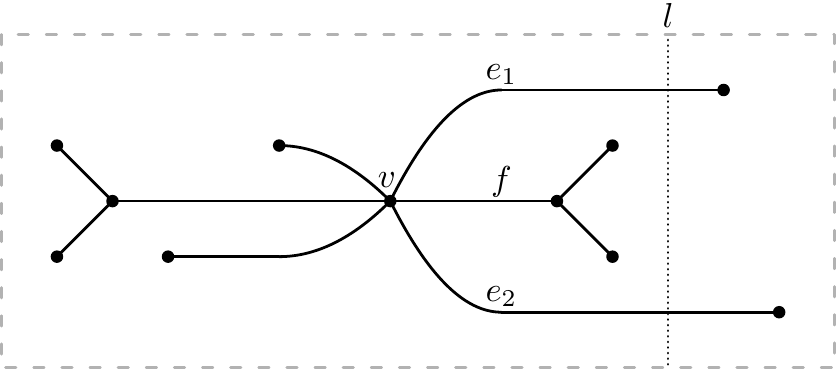}
    \caption{The edge $f$ is sandwiched between $e_1$ and $e_2$ with respect to $l$.}
    \label{fig:sandwich}
\end{figure}
Let $v$ be a vertex in $D$ and let $e_1,e_2$ and $f$ be edges incident to $v$. Let $l$ be a vertical line. We say that $f$ is \textit{sandwiched} between $e_1$ and $e_2$ with respect to $l$ if $f$ does not intersect $l$ but $e_1$ and $e_2$ intersect $l$, and $f$ lies in the resulting bigon, see Figure~\ref{fig:sandwich}. We add a small $\epsilon$ ($0<\epsilon\ll 1$) for every sandwiched edge with respect to $l$ to the height of $l$. The resulting value is called the \textit{perturbed height} of $l$.

Consider the set $\Lambda$ of lines of a given drawing as defined above and let $W(\phi)=(w_1, w_2, \ldots )$ be the sequence of perturbed heights of lines in $\Lambda$, sorted in a non-decreasing order. Let $\phi_1$ and $\phi_2$ be two drawings with maximum quality. We say that $\phi_1$ has a better \textit{secondary quality} than $\phi_2$ if the sequence $W(\phi_1)$ is lexicographically smaller than $W(\phi_2)$.

\subsection{Fat structures}

Let $D$ be a local disk. Let $\phi$ be an optimal, simplified and balanced drawing such that $Q(\phi)$ is maximal among such drawings and also its secondary quality is the best possible. By Lemma~\ref{l:comdecom}, such a drawing has a structure tree. We call the resulting structure a \emph{fat structure}\footnote{The name comes from the fact that the bubbles in a minimal drawing tend to contain a maximal part of the tree.} for $D$. It follows that any drawing which is optimal, simplified and balanced has to have worse or equal quality (or equal quality and equal or worse secondary quality).


The proof of the following is straightforward.
\begin{lemma}
Let $\phi$ be a drawing with a fat structure of a local disk $D$. Then for every local disk $D'$, corresponding to a node $N'$ in the structure tree, the restriction of the structure to $D'$ is a fat structure.
\end{lemma}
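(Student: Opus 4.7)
The plan is a local-replacement proof by contradiction. Suppose that for some node $N'$ of the structure tree, with associated local disk $D'$, the restricted structure at $D'$ (which is the structure tree of the drawing $\phi|_{D'}$) is not a fat structure of $D'$. Then by the definition of fat structure, some drawing $\psi$ of $D'$ strictly beats $\phi|_{D'}$ in the lexicographic order that first prefers smaller height, then larger quality $Q$, then smaller secondary quality $W$. I would substitute $\psi$ in place of $\phi|_{D'}$ inside $\phi$, rescaling horizontally to fit the rectangle allotted to $D'$, yielding a drawing $\phi'$ of $D$. Since $\psi$ and $\phi|_{D'}$ realize $D'$ with the same boundary edges, $\phi'$ is well-defined; and because being simplified and being balanced are local properties, $\phi'$ inherits them from $\psi$ inside $D'$ and from the unchanged $\phi$ outside $D'$.

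The heart of the argument is a depth-shift identity. Let $d^{*}$ denote the depth in $\phi$ of the interior of $D'$, that is, the number of bubbles of $\phi$ that properly contain $D'$, plus one if $D'$ itself is a bubble. Since the substitution leaves the bubble structure outside $D'$ untouched, $d^{*}$ is also the depth of $D'$'s interior in $\phi'$. A vertical line strictly inside $D'$ therefore has depth in $\phi$ equal to $d^{*}$ plus its depth in $\phi|_{D'}$, and depth in $\phi'$ equal to $d^{*}$ plus its depth in $\psi$, while lines outside $D'$ contribute identically to both quality sequences. It follows that $\Delta_{i}(\phi) = \Delta_{i}(\phi')$ for every $i < d^{*}$, and
\[
\delta_{i,j}(\phi') - \delta_{i,j}(\phi) \;=\; \delta_{i-d^{*},j}(\psi) - \delta_{i-d^{*},j}(\phi|_{D'}) \qquad \text{for } i \geq d^{*}.
\]
An analogous identity holds for the sorted perturbed-height sequence $W$, using the elementary fact that merging with an unchanged multiset preserves the lexicographic order of sorted sequences.

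With these identities in hand, each way in which $\psi$ can beat $\phi|_{D'}$ produces a contradiction. If $\psi$ strictly improves $Q$ at the same height, the shifted identity promotes the improvement of $\Delta_{k}(\psi)$ over $\Delta_{k}(\phi|_{D'})$ to a strict improvement of $\Delta_{d^{*}+k}(\phi')$ over $\Delta_{d^{*}+k}(\phi)$, contradicting the maximality of $Q(\phi)$; the secondary quality case is identical via the merging identity for $W$. If $\psi$ strictly lowers the height, then $H(\phi') \leq H(\phi)$, and strict inequality directly contradicts the optimality of $\phi$.

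The main obstacle is the sub-case in which $\psi$ has strictly smaller height but the global maximum of $\phi$ is already attained outside $D'$, so that $H(\phi') = H(\phi)$ and no immediate contradiction is visible. I would resolve it by choosing $\psi$ to be a fat drawing of $D'$ (hence lex-maximal in $(Q,W)$ among optimal-height drawings of $D'$) and then apply the shifted identity again to the $(Q,W)$-component: if the shifted improvement were to fail, then $\phi|_{D'}$ itself would match or beat every optimal-height drawing of $D'$ in $(Q,W)$, contradicting the hypothesis that $\phi|_{D'}$ is not fat for $D'$. In every case we reach a contradiction, so the restriction of the fat structure of $\phi$ to any node $N'$ must itself be a fat structure of the corresponding disk $D'$.
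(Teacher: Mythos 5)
The proposal is the natural cut-and-paste argument, and the depth-shift identity is the right engine; the handling of the $Q$- and $W$-improvement cases (same height, better quality or better secondary quality inside $D'$) is essentially sound, modulo a small omission: since the parent disk draws $p$ parallel boundary pipes above and below the strip allotted to $D'$, the identity must also shift the height index by $p$, i.e.\ $\delta_{i,j}(\phi') - \delta_{i,j}(\phi) = \delta_{\,i-d^*,\,j-p}(\psi) - \delta_{\,i-d^*,\,j-p}(\phi|_{D'})$, and the sandwiching count used in $W$ must be checked to be unaffected by those pipes. These are fixable bookkeeping details.

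However, the case you flag as the ``main obstacle'' is not actually closed by your resolution, and this is a genuine gap. You argue: pick $\psi$ a fat drawing of $D'$; if the shifted $(Q,W)$-improvement fails, then $\phi|_{D'}$ matches or beats every optimal-height drawing of $D'$ in $(Q,W)$, ``contradicting that $\phi|_{D'}$ is not fat.'' But that is not a contradiction: in the sub-case under consideration, $\phi|_{D'}$ has strictly larger height than the optimum for $D'$, and a drawing is non-fat merely by virtue of having suboptimal height, regardless of how good its $(Q,W)$ is. A high-height drawing can perfectly well have more low-height lines at low depth than any optimal-height drawing (more moves means more lines), so $Q(\phi|_{D'}) \ge Q(\psi)$ is entirely possible, and then the shifted comparison gives $Q(\phi') \le Q(\phi)$, and your substitution does not beat $\phi$ at all. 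To close this case you need an independent argument — e.g.\ that $\phi|_{D'}$ is already forced to have optimal height for $D'$ because $D'$ is cut out of $D$ by shortcuts, or a direct argument that any drop in height inside $D'$ can be accompanied by a pause at a new shortcut which does lift the $Q$-sequence at depth $d^*$ — rather than appealing to the non-fatness hypothesis, which is already satisfied by the height deficiency alone.

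Two smaller points worth tightening: the claim that ``being simplified is a local property'' needs a remark about the two moves straddling the cut line bounding $D'$, since replacing $\phi|_{D'}$ by $\psi$ can change the type of the extremal move inside $D'$ and in principle create a new simplification pair across the cut; and the sorted-merge monotonicity fact for $W$ should be stated for multisets of possibly different cardinalities, with the convention the paper uses for lexicographic comparison of unequal-length sequences.
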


The following lemma allows us to enumerate the spine disks which are possible in a fat structure. It is therefore the main result underlying our algorithm. Refer to Figure~\ref{fig:prop10} for an example.

\begin{figure}
    \centering
    \includegraphics{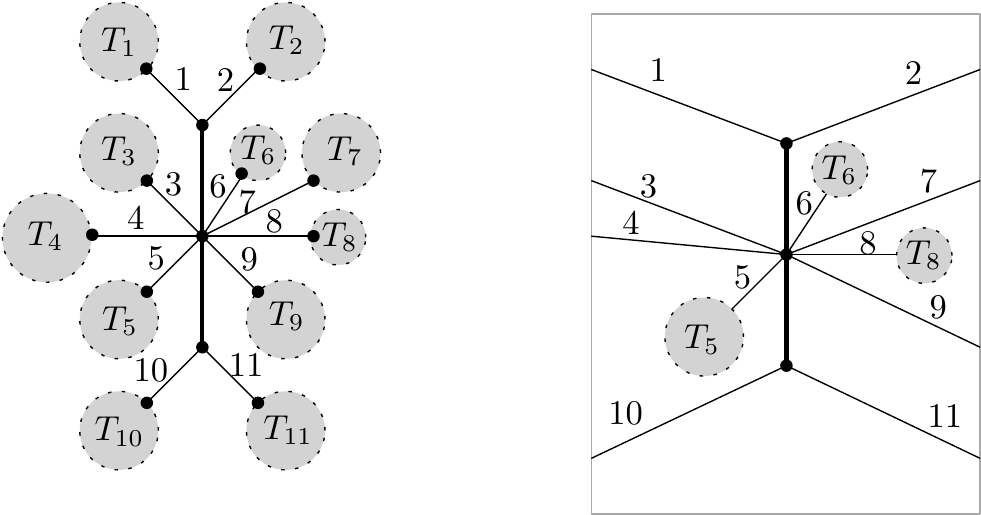}
    \caption{Left: the path $P$ (thick edges), anchor edges $A(P)$ numbered 1 to 11, and the anchored sub-trees. Right: A spine disk with spine path $P$ and $b=3$. The proposition statements imply that 1) Edges 5, 6, 8 are light. 2) Among the six sets $\{1\}$, $\{2\}$, $\{3,4\}$, $\{7,9\}$, $\{10\}$, $\{11\}$ at most one can contain a light edge. Assume it is $\{7,9\}$. 3) If $eH(T_7)=eH(T_9)=H(D)-3$ then $eH(T_8)\neq H(D)-3$.
    If $eH(T_7)<H-3$ and $eH(T_9)<H-3$, then $eH(T_8)\geq H-3$ by the second statement of part 3.
    }
    \label{fig:prop10}
\end{figure}

\begin{proposition}[Characterization of Spine Disks in Fat Structures]\label{p:spinechar}
Let $P$ be a path in the tree and let $D$ be a spine disk with spine path $P$, such that $D$ is a node in a fat structure.
Let $b>0$ be the number of left (equivalently right) boundary edges of $D$.
\begin{enumerate}
    \item Every edge $e \in A(P)$ that lies entirely in the interior of $D$ is light.
    \item All light boundary edges of $D$ are incident to a single vertex $v$, and intersect the same (left or right) boundary.
    \item For $\eta\geq 0$, let $E({\eta}) \subset A(P)$ be the set of anchor edges of $P$, incident to $v$, for which the exposed height of the sub-tree anchored by that edge is $H(D)-b+1-\eta$.
    Then, for $\eta=0$, if any edge $e$ in $E(0)$ is not a boundary edge, then $e$ is not sandwiched, with respect to the boundary lines, between two edges of $E(0)$ that are boundary edges.
    Moreover, if any edge $e$ of $E(\geq 1):=\bigcup_{j\geq 1} E(j)$ is not a boundary edge, then $e$ is not sandwiched between two edges of $E(\geq 1)$ that are boundary edges.
\end{enumerate}
\end{proposition}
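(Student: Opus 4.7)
Each of the three statements is proved by contradiction, leveraging the extremality of the ambient drawing $\phi$ as a fat-structure drawing: optimal-height, simplified, balanced, with maximum primary quality $Q(\phi)$ and, subject to that, minimum secondary quality $W(\phi)$. Supposing the relevant claim fails for the spine disk $D$, I construct a modified drawing $\phi'$ of height at most $H(D)$ that either strictly improves $Q$ or preserves $Q$ and strictly improves $W$; this contradicts the choice of $\phi$. The two key modification primitives are Lemma~\ref{l:bubble} (bubbling a sub-tree at one of the extreme moves) and the local \emph{boundary/interior swap} at the vertex $v$, which trades an interior anchor edge incident to $v$ for a boundary one.

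\paragraph*{Part 1.}
Let $e \in A(P)$ be an anchor edge lying entirely in the interior of $D$, with sub-tree $T_e$. I first claim that $T_e$ is already drawn in a bubble attached via $e$ in $\phi$: otherwise, Lemma~\ref{l:bubble} applied to the extreme moves of the sub-drawing of $T_e \cup e$ induced by $\phi$ bubbles $T_e$ without raising the height, and by the bubbling analysis in the proof of Lemma~\ref{l:comdecom} this introduces a new vertical line at a strictly deeper depth without altering any earlier entry of $Q$, contradicting $Q$-maximality. With $T_e$ in a bubble $B \subset D$, take any vertical line $\lambda$ of $\phi$ through $B$. Its height equals $eH(T_e,e)$ at the maximum-exposed column plus the number of crossings of edges outside $T_e \cup e$ (the pipes through $B$'s column). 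The plan is then to show that pipes $\geq b-1$: the cut of $D$ corresponding to $\lambda$ has length at least the shortcut length $b$, exactly one of its crossings uses $e$, and the remaining $b-1$ edges of $D \setminus (T_e \cup e)$ are forced through $B$'s column by the spine disk structure (via the $b-1$ boundary-to-boundary paths of $D$ disjoint from $e$). The bound $eH(T_e,e) + (b-1) \leq H(D)$ then yields $eH(T_e,e) \leq H(D) - b + 1$, i.e., lightness.

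\paragraph*{Part 2.}
By Lemma~\ref{l:lightedges}, every light boundary edge $e_i$ of $D$ is drawn in a bubble of depth $d = \mathrm{depth}(D)$ with $e_i$ its unique boundary edge, with the strip between the bubble and $D$ consisting only of bubbles anchored at $P$ and bends. Suppose two light boundary edges $e_1, e_2$ of $D$ are either on different sides of $D$ or incident to distinct vertices of $P$; I split into the three ensuing configurations. In each case, the goal is to nest one bubble inside the other via a re-routing that pushes one sub-tree's vertices from depth $d+1$ down to depth $d+2$, strictly increasing some $\delta_{d+2,\cdot}$ entry of $Q$ while leaving all $\Delta_{d'}$ with $d' \leq d+1$ unchanged---contradicting $Q$-maximality. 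The principal obstacle is the opposite-sides same-vertex case, where the cyclic order at the shared vertex $v$ constrains the geometry of nesting; here Lemma~\ref{l:bubble} is used twice to first compress both sub-trees into narrow bubbles, and balancedness is invoked to ensure the remaining boundary cuts continue to partition cleanly after the rerouting.

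\paragraph*{Part 3.}
For the $\eta = 0$ sub-case, suppose $e \in E(0)$ is interior and sandwiched between boundary edges $e_1, e_2 \in E(0)$ with respect to (say) the left boundary line $\ell$ of $D$. By Lemma~\ref{l:lightedges}, $T_{e_1}$ and $T_{e_2}$ are drawn in bubbles outside $D$, forcing $T_e$ into the bigon bounded by $e_1$, $e_2$, and the sub-arc of $\ell$ between them. Every vertical line through this bigon crosses $e_1$, $e_2$, the remaining $b-2$ left boundary edges of $D$, and $eH(T_e,e) = H(D) - b + 1$ edges of $T_e \cup e$ at the maximum-exposed column, totalling $H(D)+1 > H(\phi)$: a direct contradiction. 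For the $\eta \geq 1$ sub-case, the analogous count fits in $H(D)$ with slack $j \geq 1$; here I would perform a local swap that re-routes $T_e$ out of the interior bigon and into a sub-bubble of $T_{e_1}$'s external bubble, using the slack to accommodate the nested sub-bubble. This promotes the vertices of $T_e$ from depth $d+1$ (interior bigon) to depth $d+2$ (nested inside $T_{e_1}$'s external bubble), strictly increasing $\delta_{d+2,\cdot}$ and contradicting $Q$-maximality. The principal obstacle is verifying that the re-routing preserves the simplified and balanced properties at $v$; this follows by confining the modification to a neighborhood of $v$ and re-applying the non-strong simplification and balancing moves, which by the analysis in the proof of Lemma~\ref{l:comdecom} can only further improve $Q$.
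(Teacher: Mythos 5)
Your overall framing (extremality contradiction, bubbling, quality comparison) is the right one, and your Part~1 and the $\eta=0$ sub-case of Part~3 (where you get a direct count $\geq H(D)+1$ through the bigon) look workable, if more circuitous than the paper's in Part~1.  However, Parts~2 and~3 ($\eta\geq 1$) have genuine gaps.

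\textbf{Part 2.} Your central modification is to ``nest one bubble inside the other.'' This is topologically impossible in the stated generality: the bubbles of $T_{e_1}$ and $T_{e_2}$ each have a single boundary edge ($e_1$ and $e_2$ respectively), and $T_{e_1}$ attaches to the spine through $e_1$, not through $e_2$. Placing $T_{e_1}$ inside the bubble of $T_{e_2}$ would require $e_1$ to cross the boundary of that bubble, destroying the bubble. Moreover, the quality claim is wrong in direction: pushing lines from depth $d+1$ to $d+2$ \emph{removes} them from $\Delta_{d+1}$, so $\Delta_{d+1}$ lex-decreases unless you compensate elsewhere --- this makes $Q$ worse, not better, since lower depth indices are more significant. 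Your assertion that ``all $\Delta_{d'}$ with $d'\leq d+1$'' stay unchanged is exactly the step that fails. The paper's argument is quite different: after invoking Lemma~\ref{l:lightedges}, it moves the two bubbles next to the boundary, observes that this exposes a new shortcut of length $b-1$ in a larger disk $D'$ (since the two light boundary edges sit at distinct vertices of $P$, a cut can thread between them), and then applies Lemma~\ref{l:pausing} to make that shortcut vertical, increasing $\delta_{d'_0,\,b-1}$ at the fixed depth $d'_0$ of $D'$ without affecting lower depths or lower heights --- a genuine lex increase.

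\textbf{Part 3, $\eta\geq 1$.} The re-routing ``into a sub-bubble of $T_{e_1}$'s external bubble'' fails for the same topological reason as Part~2: $T_e$ attaches through $e$, which is not $e_1$, so it cannot live inside the bubble whose only boundary edge is $e_1$. More fundamentally, you argue a primary-quality improvement ($\delta_{d+2,\cdot}$ increasing), but the paper explicitly notes, immediately after the proposition statement, that the secondary quality (the $\epsilon$-perturbed heights $W(\phi)$ induced by sandwiched edges) is \emph{necessary} for precisely this sub-case. The paper's argument: first bubble $T_f$ inside $D$ (possible since $eH(T_f)\leq H(D)-b$); if $T_f$ were not already so bubbled, that would be a primary-quality increase and a contradiction. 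Then swap the left-to-right order of the bubbles of $T_f$ and $T_{e_1}$; this leaves all $\delta_{i,j}$ at depth $\leq d-1$ alone, does not change integer heights, but strictly decreases the perturbed height of a line adjacent to the new bubble (the sandwiched count drops), contradicting minimality of $W(\phi)$. Without invoking $W(\phi)$ your argument cannot close this case.
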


We remark that the secondary quality is only needed in the proof of the second part of statement 3. That is, the rest of proposition is true for drawings with maximum quality.

\begin{proof}
Let $f\in A(P)$ be an anchor edge of $P$ in the interior of $D$.
We claim that the exposed height of $T_f$ is at most $H(D)-b+1$. To see this, note that $T_f$ can be exposed (that is, drawn in a bubble with $f$ as boundary edge) by taking as the mage of $T_f$ the current drawing of $T_f$ inside $D$ and as the image of $f$ the current image of $f$ in $D$ concatenated with image of a path from $v$ to the boundary of $D$ that uses a single boundary edge. This exposed drawing of $T_f$ is within $D$ and does not use $b-1$ edge disjoint paths from the left to the right boundary in $D$. It follows that its height is at most $H(D)-(b-1)$. Therefore $f$ is a light edge. 

We now prove the second statement. Assume there are light edges $e_l$ and $e_r$ among the left and the right boundary edges of $D$ and that $e_l$ is incident to $v$ and $e_r$ is incident to $w$ and $v\neq w$. Lemma~\ref{l:lightedges} implies the sub-trees $T_{e_l}$ and $T_{e_r}$ are bubbled, and we can move the bubbles of $T_{e_l}$ and $T_{e_r}$ near the left and right boundary of $D$ without changing the non-perturbed heights. But then there exist a set of shortcuts as in Figure~\ref{fig:spinev1} in a larger spine disk $D'$. We can therefore pause the drawing of $D'$ at these shortcuts. Let $d'_0$ be the depth of $D'$. The heights of lines of depth less than $d'$ are not changed. And we have not created any line of height $b-2$ or less since the minimum height over the disk $D'$ is always $b-1$. It follows that $\delta_{d'_0, b-1}$ increases while the lower terms remain unchanged. As in the proof of Lemma~\ref{l:comdecom} we can make the new drawing balanced and simplified without making the number of lines of $\delta_{d'_0, b-1}$ smaller. The resulting drawing has a better quality than a fat structure which is a contradiction.

\begin{figure}
    \centering
    \includegraphics{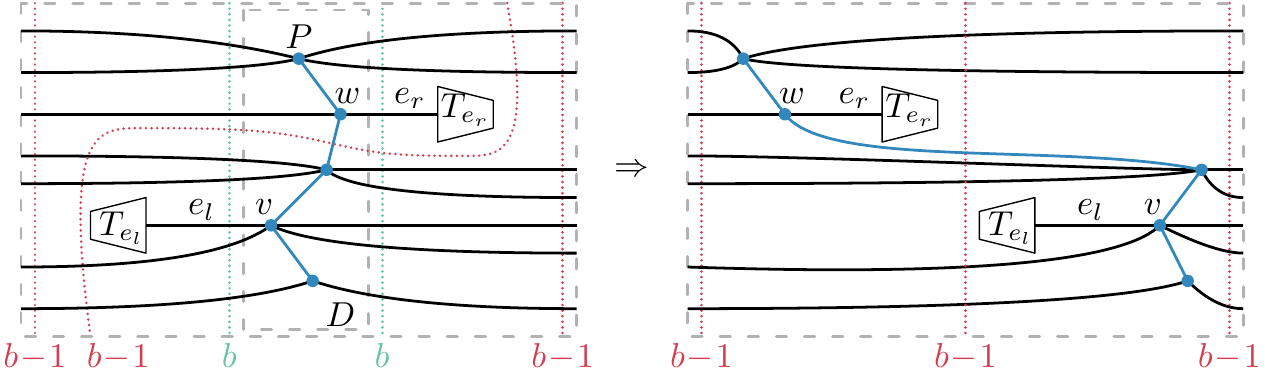}
    \caption{Light left and light right boundary edges must be incident to the same vertex.}
    \label{fig:spinev1}
\end{figure}

Next, assume there are light edges $e_1$ and $e_2$ among right boundary edges such that $e_1$ is incident to $v$ and $e_2$ is incident to $w$ and $v\neq w$. Without loss of generality assume $e_2$ anchors the sub-tree which is bubbled after the sub-tree anchored by $e_1$. This implies that $e_1$ is paired with a left boundary edge $f$ with a bubble. Since otherwise the spine would have been cut (since it has at least two vertices). For the same reason, after bubbling $e_2$, in the resulting skew spine disk, only the case 2.0 and 2.1 has happened with no cuts. Case 2.0 is also impossible since it would create a skew spine disk. It follows that there is also a bubble opposite to $e_2$. Now, after a possible change in the order of the bubbles of $e_1$ and $e_2$, we can make sure that there is a shortcut, see Figure~\ref{fig:spinev2}. Similar to the last paragraph, this will reach a contradiction.

\begin{figure}
    \centering
    \includegraphics{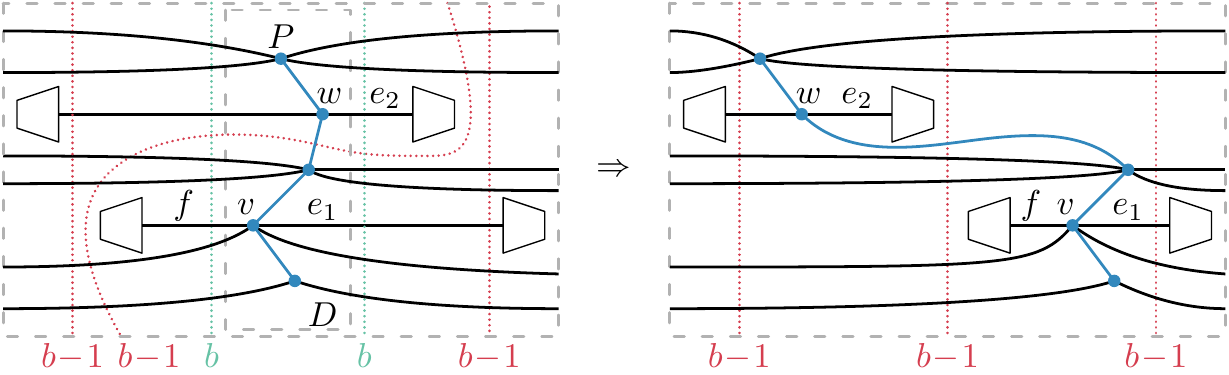}
    \caption{Light right boundary edges must be incident to the same vertex.}
    \label{fig:spinev2}
\end{figure}

We now prove the third statement. Assume that in some $D$ there are two right light boundary edges, $e_1$ and $e_2$ incident to $v$, that are consecutive on the right boundary of $D$ but not around $v$. Let $t$ be the rightmost point (which is not on $e_1$ and $e_2$) of the drawing in the region $R$ bounded by $e_1$, $e_2$ and the right boundary of $D$. Let $T_f$ be the sub-tree anchored at $v$ that contains $t$ and let $f$ be the edge anchoring $T_f$ to $v$. Since $f$ is in the interior of the tree, $eH(T_f) \leq H(D)-b+1$ by the first statement. 
We claim that $eH(T_f) \leq H(D)-b$. Note that $R$ contains a single vertex, namely $v$, on its boundary, see Figure~\ref{fig:subtreeexposure2}. Let $p$ be the leftmost point of the disk $R$. If $p=v$ or a point on the spine then the claim is clear, since we have not used any boundary edge for exposing $f$. Otherwise, there is a path on the boundary of $R$ from $v$ to $p$. We can expose the sub-tree $T_f$ using this path. Now we claim that in any vertical line of the drawing, there is at least $b$ points of the drawing of $D$ which are not in the exposed drawing of $T_f$. Indeed, we have not used $b-1$ pairs of opposed boundary edges. Moreover, we have only used a portion of an edge $e=e_1$ or $e_2$, from the point $v$ to an extreme point of that edge on the left. The union of the edge opposite $e$ and the path from $p$ to the point where $e$ intersects the right boundary covers all of the $x$-interval of the disk. Because we have not used any point of these two paths this proves the claim. It follows that $e_1,e_2 \in E(0)$ are consecutive along the boundary, and therefore none of them is sandwiched between two other ones.

Let $e_1\in E(k_1), e_2 \in E(k_2)$, $k_1,k_2 \geq 1$, be right boundary edges that are consecutive along the boundary but not around $v$. Moreover, assume that this pair is the first such pair in the order from top to bottom along the right boundary and that $e_1$ is above $e_2$. Let $f$ and $T_f$ be as above a counterexample, $eH(T_f)\leq H(D)-b$ and $t$ as above be the rightmost point in the region $R$ defined as before. In this case, we can draw $T_f$ in a bubble near the boundary of $D$ and inside $D$ without increasing the maximum height. If $T_f$ is not already bubbled in the drawing, this operation increases the quality (as in proof of Lemma~\ref{l:comdecom}) which cannot be. Therefore, $T_f$ is drawn inside $D$ in an exposed way. It follows that we can exchange the order of bubbles of $T_f$ and $T_1$. No line of depth $d-1$ or less changes its height after this change, while the perturbed height of some line of depth $d$ (e.g. the one just before the bubble of $T_f$ in the new drawing) is decreased and the perturbed height of other lines does not increase, see Figure~\ref{fig:spinev3}. This is a contradiction with our choice of the drawing. This finishes the proof of the third statement.

\begin{figure}
    \centering
    \includegraphics{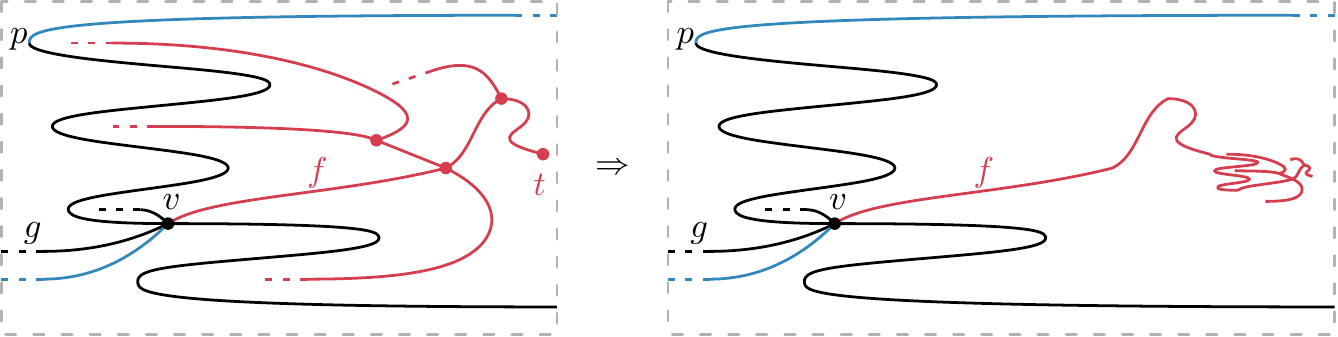}
    \caption{Exposing a sub-tree sandwiched between boundary edges.}
    \label{fig:subtreeexposure2}
\end{figure}

\begin{figure}
    \centering
    \includegraphics{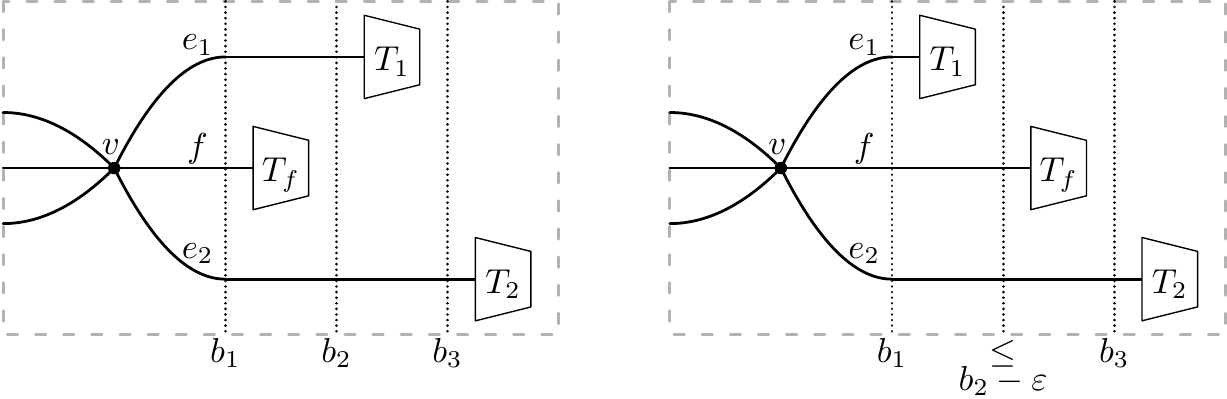}
    \caption{Changing the order of bubbles.}
    \label{fig:spinev3}
\end{figure}

\end{proof}

\section{The dynamic program}

We describe the algorithm for computing the optimal height of an input drawing. Modifying the dynamic program to compute an actual optimal height drawing is standard. 

We think of row $m$ of the dynamic programming table as containing (the description) of those spine and skew spine disks that have exactly $m$ vertices in their interior and satisfy Proposition~\ref{p:spinechar}, together with their optimal heights. For $m=1$, i.e. the first row, we must consider the (skew) spine disks with exactly one vertex in their interior. Since we are interested in balanced drawings, we know that each vertex $v$ of even degree defines $O(d(v))$ distinct spine disks, where $d(v)$ is the degree of the vertex $v$. These are given by all the $O(d(v))$ possible balanced partitions of the edges incident to $v$ into left and right edges, maintaining the order around $v$. The optimal drawings are trivial. Similarly, vertices with odd degree determine $O(d(v))$ distinct skew spine disks. 

Assume that we have populated the table up to row $m-1$. The algorithm first computes all spine and skew spine disks with $m$ vertices that satisfy Proposition~\ref{p:spinechar}. 
Then for each computed (skew) spine disk the algorithm computes its optimal height.

\subparagraph{Computing all (skew) spine disks with $m$ vertices}
We consider only spine disks and this also determines all possible skew spine disks. This is because a skew spine disk is the result of changing one non-boundary anchor edge of a spine disk into a boundary edge.

If we know the exposed heights of anchored sub-trees, then Proposition~\ref{p:spinechar} implies that a spine disk is determined uniquely given the following parameters. We also indicate an upper bound on the number of possibilities for each of them.
\begin{enumerate}
    \item The spine path $P$: $O(n^2)$ possibilities.
    \item The boundary height $b$: $O(n)$ possibilities.
    \item The height $H$: $O(n)$ possibilities.
    \item A partition of $A(P)$ into cyclically contiguous subsequences $A_L(P)$ and $A_R(P)$: $O(n^2)$ possibilities.
    \item The vertex $v$ to which light boundary edges are incident: $O(n)$ possibilities.
    \item Two consecutive sequences of edges around $v$: one for $E(0)$ and the other for $E(\geq 1)$: $O(n^4)$ possibilities.
\end{enumerate}

For the sake of simplicity, we are very generous in our analysis. As above, there are polynomially many possible values for all these parameters, namely $O(n^{11})$. Not any such set of values defines a valid spine disk and there are some exceptional situations.
The details of how we test each set of parameters is as follows.

We use different parts of Proposition~\ref{p:spinechar} in order. We first choose the spine path $P$, containing $p\geq 1$ vertices, and the number of boundary edges $b>0$. 
All spine disks with $b=0$ have a single (arbitrary) vertex as $P$, and the disk contains all of $T$, so $T$ has $m$ vertices.
Next, consider $b>0$. We must compute all possible sets of boundary edges for disks with $m$ vertices. 

Let $A=A(P)$ be the set of anchor edges of $P$. There is a uniquely defined clockwise order on $A$, obtained by contracting $P$ to a point and considering the clockwise order of the edges around the contraction point. We choose one partition of edges of $A=A_L \sqcup A_R$ into left and right edges, respecting the order. There are $O(n^2)$ such partitions. Up to this point we know the left and right potential edges of a spine disk, however, we do not which anchor edge incident to $P$ is a boundary edge of a spine disk.

Let $f \in A(P)$ and $T_f$ be the sub-tree anchored by $f$. Recall that anchored sub-trees do not contain any vertices of the spine path, so if $T_f$ contains more than $m-p$ vertices, then $f$ has to be a boundary edge. Let $B_1$ be the set of boundary edges determined in this way. If the number of boundary edges found is $2b$ such that there are $b$ edges on each side we have found a spine disk. If it is greater than $2b$ or is equal to $2b$ but one side has more than $b$ edges there cannot be a spine disk with these parameters. In both cases we stop and continue with the next set of parameters. Note that in the former case, there is a single spine disk with the given parameters, that is, the path $P$, the number $b$ and the given partition of $A(P)$ into the left and right edges. 

At this point in the algorithm we apply Proposition~\ref{p:spinechar} to compute all possible spine disks in a fat structure with our parameters. We do not know the height of $D$ that appears in Proposition~\ref{p:spinechar}. By Lemma~\ref{l:finiteness} this height is upper bounded by a constant times $n(H+1)$, hence we can consider all possible values of $H$.
We apply the following for increasing values of $H$, starting from $H=b$, making $H$ also a parameter. 

Let $f \in A(P)-B_1$ be an edge such that $T_f$ has at most $m-p$ vertices. Then the bubble of $T_f$ is a skew spine disk in row $m-p$ of our table. We read off the exposed height of $T_f$ from the table. If $f$ is not a light edge (with respect to $H$) we make it a boundary edge (left or right depending on whether $f$ is in $A_L$ or $A_R$). These boundary edges are correct by part 1 of Proposition~\ref{p:spinechar}. Let $B_2$ be the set of boundary edges computed up to this point. The anchor edges that are undecided at this point are therefore all light edges.
 
If $|B_2\cap A_L| > b $ or $|B_2\cap A_R| > b$, then we stop and deduce that there is no spine disk with our current set of parameters. If both $|B_2\cap A_L| < b $ and $|B_2\cap A_R| < b$, then we again stop since we know by Proposition~\ref{p:spinechar} that light edges appear only on one side of the boundary. It follows that if we have not stopped, at least one of $B_1$ or $B_2$ consists of exactly $b$ edges. If both have exactly $b$ edges the given parameters define a unique spine disk. Otherwise, exactly one of these sets consists of fewer than $b$ boundary edges and the other consists of exactly $b$ boundary edges.

Without loss of generality, assume that $|B_2 \cap A_R|<b$ and set $c_r = b - |B_2\cap A_R|>0$. We need to find yet $c_r$ boundary edges among the light edges of $A_R$. We choose a vertex $v$ of $P$. We then choose a segment of $0 \leq k \leq c_r$ consecutive edges, around $v$, all having exposed height $H-b+1$. Then for the rest of the $c_r-k$ edges, we choose\footnote{This sequence of edges can also be computed by taking the sequence that when interleaved with other boundary edges around $v$ produces the smallest perturbed heights. However, for ease of exposition, we do not apply such optimizations.} a sequence of $c_r-k$ consecutive light edges whose anchored trees have exposed height at most $H-b$ around $v$ (note that the order on $A_R$ is not cyclic anymore). There are $O(n^5)$ total choices for the vertex $v$ and the two sequences of light edges around it. Part 2 and 3 of Proposition~\ref{p:spinechar} imply that these parameters describe a unique spine disk of a fat structure, if at all. If we cannot continue at any point in this procedure, there are no spine disks with the parameters at hand.

\subparagraph{Computing the optimal height of spine disks with $m$ internal vertices.}
In the previous section, we computed possible spine disks with $m$ vertices using the information in the dynamic programming table. Here we compute the optimal height of these disks.

Let $D'$ be a spine disk in the table with $m$ internal vertices. Proposition~\ref{l:bpdrawing} gives a number of possibilities that a structured drawing of $D'$ can be decomposed into drawings of spine and skew spine disks of lower complexity (i.e. number of vertices and edges). As long as we know what the first and the last moves in the drawing of $D'$ are, we know which case of the proposition applies. The first move either is a bend or a vertex. If it is a bend it can be on any edge. Thus the total possibilities for the first (or the leftmost) move is the total number of edges and vertices in the tree $T_D$. Analogously the rightmost move has the same maximum number of 
possibilities. If $D'$ contains all of $T$, it is easily observed that the first and last move can only be vertices. It follows that there are $O(n^2)$ possible choices for first and the last move of the optimal drawing of $D'$. For each such choice, the algorithm computes the sequence of cuts (that is, vertical lines separating the internal blocks in Figures~\ref{fig:spinedecom1} and~\ref{fig:spinedecom2}) for the corresponding case, and then reads off the optimal height for internal disks in the corresponding case of Figure~\ref{fig:spinedecom1} from the table. This is possible since all internal (skew) spine disks in the cases have fewer vertices. It then computes the height for the possibility under consideration. This amounts to writing a number in an internal trapezoid or rectangle in the figure, as the height of that block, and then computing the resulting height of the larger (dashed) spine disk. The best height obtained, for all the choices of first and last moves, is written in the table as the optimal height of the disk $D'$. Since the children of $D'$ in a fat structure are guaranteed to be found among the spine disks in the table, we see that we obtain an optimal height for the disk $D'$. 

The optimal height for the skew spine disks can be computed similarly with the only difficulty being that in case 2.0) of Figure~\ref{fig:spinedecom2} it might happen that the resulting spine disk has the same number of vertices. In this case, there are four possibilities that result from applying $x$- and $y$-reflections to the figure. In each of these possibilities, case 2.0) can be repeated, resulting in a nested set of skew spine disks. If we choose $s$ to be the number of nested such skew spine disks, there is a unique configuration that is possible for each value of $s$. The case analysis for $s=3$ is depicted in Figure~\ref{fig:case2-0}. There can be at most $H$ such iterations for each possibility. Therefore, we have $4H$ possibilities here before going into a skew spine disk where the next move is not case 2.0) and the problem breaks into disks with less vertices. It follows that this special case does not change the fact that the algorithm runs in polynomial time. This finishes the description of the algorithm.

\begin{figure}
    \centering
    \includegraphics{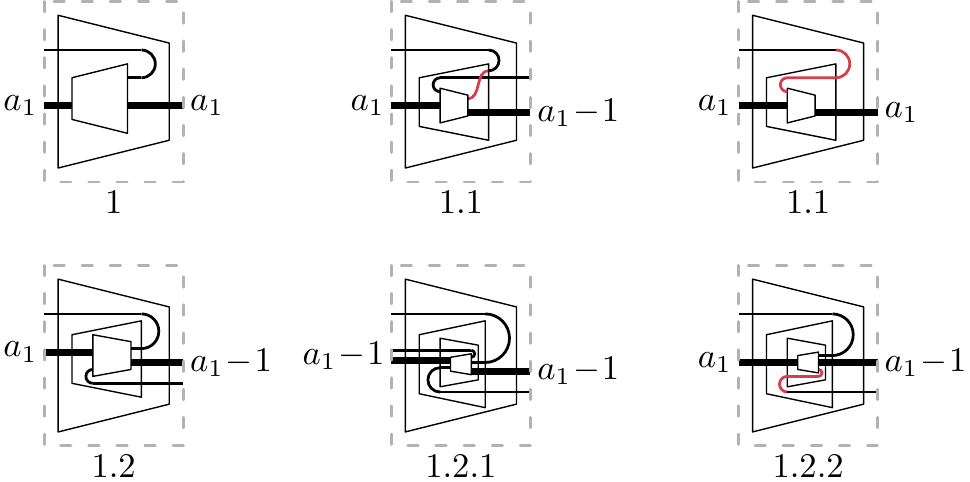}
    \caption{The figure labeled 1 is one of the four possibilities for case 2.0). The figures labeled 1.1 are the result of one nesting where the bend of the innermost case 2.0) is on top. One of the cases 1.1 forces an intersection, the other a bend-vertex separation or a stuck-slide with the part of drawing inside the innermost trapezoid. Hence they are both impossible.  The bend can be on the bottom as in case 1.2. The two possibilities for one more nesting of case 2.0) are labeled 1.2.1 and 1.2.2. Only 1.2.1 is possible. We have not depicted an impossible alternative for 1.2.2 that contains a crossing analogous to 1.1. The pattern continues for deeper nestings. }
    \label{fig:case2-0}
\end{figure}

We have therefore proved the following.

\begin{theorem}
Let $D$ be a (skew) spine disk. There is a polynomial-time algorithm for drawing $D$ with optimal height.
\end{theorem}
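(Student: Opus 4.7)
The plan is to verify that the dynamic program of the previous section terminates in polynomial time and correctly returns the optimum height of any (skew) spine disk; recovering an actual optimum drawing is then a standard augmentation (record the winning decomposition at each cell and traverse top-down), so I would focus on the height computation.

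For correctness, I would proceed by induction on the number $m$ of interior vertices, with the invariant that row $m$ of the table contains, for every (skew) spine disk $D'$ on $m$ interior vertices that appears as a node of a fat structure of some optimum-quality drawing, the true optimum height $H(D')$. The base $m=1$ is immediate from the fact that the only single-vertex (skew) spine disks are the balanced partitions of edges around that vertex, each admitting a trivial monotone drawing. For the inductive step, Proposition~\ref{p:spinechar} guarantees that the parameter enumeration over $(P,b,H,A_L \sqcup A_R, v, \text{light-edge sequences})$ described in the ``Computing all (skew) spine disks'' paragraph captures every candidate disk appearing in a fat structure; the lightness test for an anchor edge $f$ calls upon the exposed height of the sub-tree $T_f$, which is stored in a lower row as the optimum height of a bubble on at most $m-p$ vertices. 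Proposition~\ref{l:bpdrawing} then provides a decomposition of any optimum drawing of the candidate into (skew) spine disks with strictly fewer interior vertices, so the algorithm's minimisation over $O(n^2)$ choices of first/last move and finitely many schemata correctly retrieves $H(D')$ from the table.

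For the running time, the $O(n^{11})$ parameter tuples per row (as tallied in the preceding section), each checked in polynomial time, combined with the $O(n)$ rows, give an overall polynomial cost. Applying the algorithm to the input (skew) spine disk $D$ and reading off the stored optimum height then yields the stated theorem.

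The main obstacle is case~2.0) of Figure~\ref{fig:spinedecom2}, where a skew spine disk may decompose into an inner skew spine disk with the same number of interior vertices, so the naive induction on $m$ does not close. I would handle it exactly as the algorithm does: for each of the four orientations the nested configuration is forced (Figure~\ref{fig:case2-0}), so I would group up to $H=O(n)$ nested applications of case~2.0) into a single composite rule whose terminal child is a skew spine disk whose leftmost move is not case~2.0). This terminal child has strictly fewer interior vertices, restoring the induction at the cost of only an $O(nH)=O(n^2)$ blow-up in the per-disk work.
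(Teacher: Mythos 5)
Your proposal is correct and follows essentially the same approach as the paper: a dynamic program indexed by the number $m$ of interior vertices, whose correctness rests on Proposition~\ref{p:spinechar} (so that the parameter enumeration captures every disk occurring in a fat structure, relying on the fact that the restriction of a fat structure to any of its nodes is again a fat structure) and on Proposition~\ref{l:bpdrawing} (so that the optimum of any candidate is attained by a decomposition into smaller disks already in the table), together with the special treatment of case~2.0 by bounding the number of nested self-similar skew disks by $4H = O(n)$. The only minor imprecision is the phrase ``a decomposition of any optimum drawing'': Proposition~\ref{l:bpdrawing} guarantees a decomposition of \emph{some} optimum drawing (and Lemma~\ref{l:comdecom} supplies a structured one whose pieces the table is guaranteed to contain), which is enough since the algorithm minimises over all schemata; your subsequent discussion makes clear you use it in exactly this way.
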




\section{Discussion}
    We have presented the first polynomial-time algorithm for drawing plane trees with optimal height.
    The case of weighted plane trees remains open.
    Moreover, the setting of unweighted graphs remains open, but is believed to be NP-hard by some.
    However, we believe that a polynomial time algorithm may exist even in this setting.
    
    If the graph setting turns out to be NP-hard, then the situation resembles that of the (non-embedded) min-cut linear arrangement problem, which has a polynomial time algorithm for unweighted trees~\cite{Yan85}, but is NP-hard for graphs~\cite{Gav77,GaJo79}.
    
    There are other interesting problems around the complexity and properties of optimal height drawings that might help in finding faster algorithms.
    As one such property, we conjecture that for unweighted trees there always exists an optimal drawing without spiraling edges.
    A spiral on an edge is depicted in Figure~\ref{fig:spiral}.
    There is a more natural notion of quality. In each depth, instead of $\Delta_i$ we take the sequence of heights of lines, ordered in a non-increasing way. We believe drawings which minimize this complexity have the same properties as those that maximize the quality and have minimum height. Some of the proofs even are simpler.

\bibliography{bib}

\end{document}